\theoremstyle{definition}
\newtheorem{rem}{Remark}
\newtheorem{theorem}{Theorem}
\newtheorem{corollary}{Corollary}
\newtheorem{definition}{Definition}
\begin{document}
\title{Semi-coherent Detection and Performance Analysis for Ambient Backscatter System}

\author{Jing Qian, Feifei Gao, Gongpu Wang, Shi Jin, and Hongbo Zhu

\thanks{J. Qian and F. Gao are with Tsinghua National Laboratory for Information Science and Technology (TNList), Beijing 100084, P. R. China (Email: qian-j13@mails.tsinghua.edu.cn, feifeigao@ieee.org).
G. Wang is with School of Computer and Information Technology, Beijing Jiaotong University, Beijing 100044, P. R. China (Email: gpwang@bjtu.edu.cn).
S. Jin is with the National Communications Research Laboratory, Southeast University, Nanjing 210096, P. R. China (Email:
jinshi@seu.edu.cn).
H. Zhu is with the Jiangsu Key Laboratory of Wireless Communications, Nanjing University of Posts and Telecommunications, Nanjing 210003, P. R. China (Email: zhuhb@njupt.edu.cn).}
}

\maketitle
\thispagestyle{empty}
\vspace{-10mm}

\begin{abstract}
We study a novel communication mechanism, ambient backscatter, that utilizes radio frequency (RF) signals transmitted from an ambient source as both energy supply and information carrier to enable communications between low-power devices. Different from existing non-coherent schemes, we here design the semi-coherent detection, where channel parameters can be obtained from unknown data symbols and a few pilot symbols.  We first derive the optimal detector for the complex Gaussian ambient RF signal from likelihood ratio test  and compute the corresponding closed-form bit error rate (BER). To release the requirement for prior knowledge of the ambient RF signal, we next design a suboptimal energy detector with ambient RF signals being either the complex Gaussian or the phase shift keying (PSK). The corresponding detection thresholds, the analytical BER, and the outage probability are also obtained in closed-form. Interestingly, the complex Gaussian source would cause an error floor   while the PSK  source does not, which brings nontrivial indication of constellation design as opposed to the popular Gaussian-embedded  literatures.  Simulations are   provided to corroborate the theoretical studies.

\end{abstract}

\begin{IEEEkeywords}
Ambient backscatter, semi-coherent detection, performance analysis, BER, outage probability.
\end{IEEEkeywords}

\IEEEpeerreviewmaketitle

\section{Introduction}
The Internet of Things (IoT) \cite{refs:tiotas,refs:tiotas1} describes the next generation of Internet, where all things could be accessed and identified through the Internet via sensing devices \cite{refs:btioturtreeaoopr,refs:wesfiotapa}.  As emerging wirelessly sensory technologies have significantly improved the capability of devices, IoT is being extended to ambient intelligence and autonomous control \cite{refs:iwtsiotdqsaopows,refs:iipaaomrtcm,refs:tioffrttnpns}.  Such an extension,  however, also leads to a key bottleneck in its development: since such a huge number of devices need to be battery-free and has to be powered with harvested energies, generating radio waves themselves typically seems to be unrealistic.

One solution is the backscatter communication \cite{refs:bcarceama,refs:ebbcisehe}, where devices can transmit their data through modulating and reflecting incident radio frequency (RF) signals. It is distinct from traditional wireless communications in that backscatter devices consume power orders-of-magnitude less, as they require no energy hungry components such as oscillators. A typical application example is the radio frequency identification (RFID) consisting of an active reader (the transceiver) and a passive tag (the backscatter node). Specifically, the reader can generate continuous carrier waves, while the tag modulates its information onto the carrier wave by adapting its antenna impedance loading to vary the reflection coefficient and then backscatters the signal to the reader.

In order to enable ubiquitous communications between battery-free devices, a novel communication mechanism, called ambient backscatter, was introduced in \cite{refs:abwcoota}, which leverages existing ambient RF signals and applies them into the backscatter communication. The ambient backscatter differs from conventional backscatter communications in that it does not require a centralized high-cost infrastructure (e.g., a RFID reader) to transmit pre-requisite signals and to initiate/control communications with devices.
Moreover, since ambient RF signals are always available, it enables the communication between passive devices almost everywhere and anytime.

Following \cite{refs:abwcoota}, the way of connecting ambient backscatter tags with the Internet via the existing Wi-Fi infrastructure was designed in \cite{refs:wbicffd}. In \cite{refs:tabc}, the authors presented the multi-antenna interference cancellation scheme operating on the backscatter devices. Nevertheless, these works mainly focus on the hardware design and the prototype presentation with modest decoding performance but did not provide the fundamental results from theoretical aspects.

Some exploration about signal detection for the ambient backscatter communication was presented in \cite{refs:udabafabcs,refs:abcsdapa,refs:sdabafrpduab}, where the tag tends to employ the on-off signaling with a low data rate, and the reader can decode tag's information by simple detection strategies. Another transmission model was proposed in \cite{refs:sdfabswmra}, where the reader is equipped with multiple antennas. The authors of \cite{refs:sdaoabswdm} looked into the non-coherent symbol detection under the condition that the channel state information is unknown, and provided a method to estimate the system parameters without sending pilots. Meanwhile, a detection algorithm based on statistical covariances is suggested in \cite{refs:scbsdfabcs}, which requires extremely large number of samples.

In this paper, we provide a fundamental study over the semi-coherent detection of the classical three-node ambient backscatter system\footnote{Some of our  preliminary results  were published in \cite{refs:sdapaotabs}.}, where the channel state information (CSI) is unknown and training symbols are sent to acquire the detection-required parameters rather than the channels themselves. We first derive the optimal detector from the likelihood-ratio test of the received signal vector with the assumption of complex Gaussian ambient RF signals. As the optimal detector requires the availability of the prior knowledge of ambient RF signals and comes with a less informative BER expression, a suboptimal energy detector is designed, where we consider both the complex Gaussian and the phase shift keying (PSK) ambient RF signals, and derive their corresponding optimal detection thresholds. The analytical bit error rate (BER) as well as the BER-based outage probability are obtained in closed-form, which tells more insight of the system parameters and helps choosing the optimal parameters. Interestingly, we demonstrate that the BER with complex Gaussian ambient RF signals would exhibit an error floor while that with PSK ambient RF signals does not.  A practical approach that estimates the parameters from the unknown data symbols and a few  pilot symbols is also proposed. Finally simulation results demonstrate the effectiveness of different detectors as well as the correctness of the theoretical analysis.

The rest of the paper is organized as follows. Section II outlines the system model. In Section III, the optimal detector and the suboptimal energy detector is derived, along with their corresponding performance analysis. In Section IV, the parameter estimation for the semi-coherent detection is proposed. The simulation results are provided in Section V and Section VI concludes the paper.

\textbf{Notations:} Vectors and matrices are boldfaced letters: the Hermitian, the inverse, and the determinant of matrix ${\boldsymbol A}$ are denoted by ${\boldsymbol A}^H$, ${\boldsymbol A}^{-1}$, and ${\rm det}({\boldsymbol A})$, respectively; ${\boldsymbol 1}_N$ and ${\boldsymbol I}_N$ are the $N$-order unit vector and the $N$-order unit matrix, respectively; $\left\|\boldsymbol y\right\|$ denotes the Euclidean norm of vector $\boldsymbol y$. Scalars are lowercase letters: $h^*$, $|h|$, and $\Re\{h\}$ denotes the conjugate, the modulus, and the real part of complex number $h$, respectively. $\mathbb{E}\{X\}$ and $\mathrm{var}\{X\}$ are the statistical expectation and the statistical variance of random variable $X$, respectively; $\mathcal{N}(\mu,\sigma^2)$ and $\mathcal{CN}(\mu,\sigma^2)$ respectively denotes the Gaussian distribution and the circularly symmetric complex Gaussian distribution with mean $\mu$ and variance $\sigma^2$.

\section{System Model}
\label{sec:system model}

\begin{figure}[t]
\centering
\includegraphics[width=100mm]{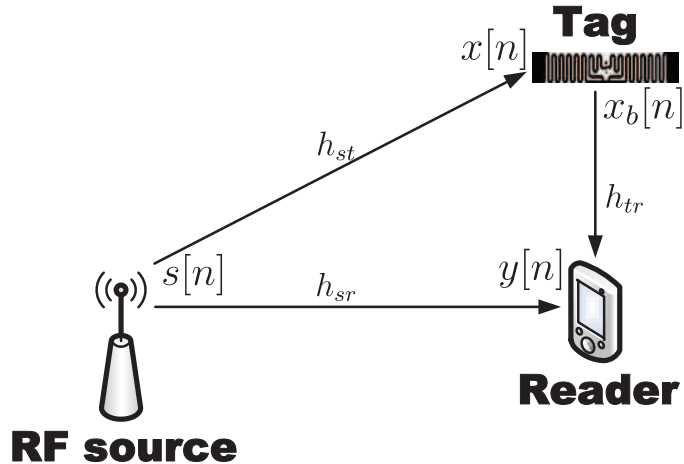}
\caption{A three-node ambient backscatter system consisting of the RF signal source, a passive tag and a reader.}
\label{fig:system model}
\end{figure}

Consider a classical three-node ambient backscatter system as depicted in Fig. \ref{fig:system model}. Denote $h_{st}$, $h_{sr}$, and $h_{tr}$ as the coefficients of the channels from the source to the tag, from the source to the reader, and from the tag to the reader, respectively. A frequency-flat and block-fading channel model is assumed, where all the channels are constant within the channel coherence time but  may vary independently in different coherence intervals.

The signal from the ambient RF source can be received by both the tag and the reader. The tag transfers its binary symbols to the reader by choosing whether to backscatter the incident RF signal or not. Specifically, if the tag wants to transmit the symbol ``0'', it will adjust its impedance so that little of the incident signal can be reflected; while if it wants to transmit the symbol ``1'', some of the incident signal will be backscattered to the reader. The reader then senses the changes in the received signals and thus decode the transmitted symbols of the tag.

Mathematically, the signal received by the tag can be expressed as
\begin{align}
x[n]= h_{st}s[n],
\end{align}
where $s[n]$ is the unknown ambient RF signal. Since the tag only consists of passive components related to backscattering and involves little signal processing operation, the thermal noise at the tag could be negligible \cite{refs:aattbfurt}.

Suppose the transmitted binary symbols of the tag is $d[n]\in\{0,1\}$, where ``0'' and ``1'' are of equal transmit probabilities. The signal backscattered by the tag is
\begin{align}
x_b[n]= \alpha d[n]x[n],
\end{align}
where the real number $\alpha$ is the tag coefficient related to scattering efficiency and antenna gain.

The reader receives the superposition of the signal from the RF source and the signal backscattered from the tag:
\begin{align}\label{eq:y}
y[n]&= h_{sr}s[n] + h_{tr}x_b[n]+ w[n]= (h_{sr} +\alpha h_{st}h_{tr}d[n])s[n] + w[n],
\end{align}
where $w[n]$ is the zero-mean additive white Gaussian noise (AWGN) with variance $N_w$, i.e., $w[n]\sim\mathcal{CN}(0,N_w)$.

Compared with the conventional  communications model, (\ref{eq:y}) is more challenging in that, besides the detected symbol $d[n]$, $h_{st}$, $h_{sr}$, $h_{tr}$, $\alpha$, $s[n]$ and $w[n]$ are all unknown to the reader, while these parameters are coupled with each other in a more complicated way.

\section{Symbol Detection}
Different from the high-speed data transmission in conventional wireless networks, the communication involved in the ambient backscatter system is generally in a low-rate manner. For example, the long-term parameters feedback in sensor networks or in the IoT. Thus, the tag will transmit at a much lower rate than the rate of the ambient RF signal, say, $d[n]$ remains unchanged for $N$ (an even number without loss of generality) consecutive $s[n]$'s.

For clarity, let us omit the index $n$ in $d[n]$ and use $d$ to denote one symbol  of the tag. Meanwhile, denote $\boldsymbol{y}=[y[1],\cdots,y[N]]^T$ as its corresponding received signal vector at the reader, where
\begin{align}\label{eq:y}
y[n]=\left\{\begin{array}{ll}
h_0s[n]+w[n],&~~~~d=0,\\
h_1s[n]+w[n],&~~~~d=1,
\end{array}\right.
\end{align}
and we define $h_0=h_{sr}$ and $h_1 = h_{sr} + \alpha h_{st}h_{tr}$ for notation simplicity.

\subsection{Optimal Detector with the Complex Gaussian Ambient Source}
In this section, we assume that the ambient RF signal follows the complex Gaussian distribution, i.e., $s[n]\sim\mathcal{CN}(0,P_s)$.

Denote $\mathcal{H}_0$ and $\mathcal{H}_1$ as the hypotheses that the tag's transmitted symbol is   $d=0$ and $d=1$, respectively. The received signal vector $\boldsymbol{y}$ is then a complex Gaussian vector with
\begin{align}\label{eq:ydistri}
\boldsymbol{y}\sim\left\{\begin{array}{ll}
\mathcal{CN}(\boldsymbol{0},\sigma_0^2\boldsymbol{I}_N),&~~~~\mathcal{H}_0,\\
\mathcal{CN}(\boldsymbol{0},\sigma_1^2\boldsymbol{I}_N),&~~~~\mathcal{H}_1,
\end{array}\right.
\end{align}
where
\begin{align}\label{eq:sigma2}
\sigma_0^2\triangleq|h_0|^2P_s+N_w,~~~~~~\sigma_1^2\triangleq|h_1|^2P_s+N_w.
\end{align}

\begin{rem}
Although the knowledge of CSI is unavailable, the values of $\sigma_i^2$ can be estimated in a way as will be presented in Section \ref{sec:estimation} and
are assumed known throughout our discussions. Moreover,
estimating $\sigma_i^2$ is more robust than estimating the channels themselves since the channel energy (or equivalently the channel amplitude) varies much slower than the instantaneous CSI.
\end{rem}

Under the maximum likelihood paradigm \cite{refs:dc}, the optimal symbol detection can be achieved from the likelihood ratio testing, defined as
\begin{align}
\Lambda(\boldsymbol{y}) =\frac{p\left(\boldsymbol{y}|\mathcal{H}_0\right)}{p\left(\boldsymbol{y}|\mathcal{H}_1\right)} =\left(\frac{\sigma_1^2}{\sigma_0^2}\right)^N\exp\left(\frac{\sigma_0^2-\sigma_1^2}{\sigma_0^2\sigma_1^2}Z\right),
\end{align}
where $Z=\|\boldsymbol{y}\|^2$, and $p(\boldsymbol{y}|\mathcal{H}_i)$ represents the probability density function (PDF) of $\boldsymbol{y}$ under the hypothesis $\mathcal{H}_i$. Obviously, the likelihood ratio depends only on $Z$, i.e., the energy of the received signal vector, which is the key statistics of the testing.

However, different from conventional detection methods, whether $\Lambda(\boldsymbol{y})$ is increasing over $Z$ or not depends on the relationship between the values of $\sigma_0^2$ and $\sigma_1^2$. Thus, the decision rule could be made through
\begin{align}\label{eq:thcond}
\Lambda(\boldsymbol{y})~\mathop{\gtrless}^{\mathcal{H}_0}_{\mathcal{H}_1}~1 ~~\Longleftrightarrow~~
\left\{\begin{array}{ll}
\displaystyle Z~\mathop{\gtrless}^{\mathcal{H}_0}_{\mathcal{H}_1}~T_{h}^{\mathrm{CG-op}},&~~~\sigma_0^2>\sigma_1^2,\\[2 mm]
\displaystyle Z~\mathop{\lessgtr}^{\mathcal{H}_0}_{\mathcal{H}_1}~T_{h}^{\mathrm{CG-op}},&~~~\sigma_0^2<\sigma_1^2,
\end{array}\right.
\end{align}
where $T_h^{\mathrm{CG-op}}$ is the threshold for locating the range of the energy $Z$. In fact, (\ref{eq:thcond}) can be referred to
as a modified energy detection.

\begin{rem}
If $\sigma_0^2=\sigma_1^2$, then the two hypotheses cannot be discriminated and the detection fails. Nevertheless, the probability for such scenario to happen is nearly zero.
\end{rem}

\begin{theorem}
The threshold for the optimal ML detector can be expressed as
\begin{align}\label{eq:optimalth}
T_{h}^{\mathrm{CG-op}} =\frac{N\sigma_0^2\sigma_1^2}{\sigma_1^2-\sigma_0^2}\ln\frac{\sigma_1^2}{\sigma_0^2}.
\end{align}
\end{theorem}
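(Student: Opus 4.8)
The plan is to obtain the threshold directly from the defining equation of the likelihood-ratio test, namely by setting $\Lambda(\boldsymbol{y}) = 1$ and solving for the energy statistic $Z = \|\boldsymbol{y}\|^2$. Since the decision rule in (\ref{eq:thcond}) compares $\Lambda(\boldsymbol{y})$ against unity and $\Lambda$ depends on $\boldsymbol{y}$ only through $Z$, the boundary between the two decision regions is a single value of $Z$, which by definition is $T_h^{\mathrm{CG-op}}$.

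First I would substitute the closed-form expression for $\Lambda(\boldsymbol{y})$ into the equation $\Lambda(\boldsymbol{y}) = 1$ and take the natural logarithm of both sides, which turns the product-and-exponential form into the linear relation $N\ln(\sigma_1^2/\sigma_0^2) + \frac{\sigma_0^2 - \sigma_1^2}{\sigma_0^2\sigma_1^2}Z = 0$. Then I would isolate $Z$, obtaining $Z = \frac{N\sigma_0^2\sigma_1^2}{\sigma_1^2 - \sigma_0^2}\ln\frac{\sigma_1^2}{\sigma_0^2}$, which is precisely (\ref{eq:optimalth}).

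Finally I would add a brief consistency check: since $Z = \|\boldsymbol{y}\|^2 \geq 0$, the threshold must be nonnegative, and indeed the factor $\frac{1}{\sigma_1^2 - \sigma_0^2}\ln\frac{\sigma_1^2}{\sigma_0^2}$ is strictly positive whether $\sigma_1^2 > \sigma_0^2$ or $\sigma_1^2 < \sigma_0^2$, because the numerator and denominator always share the same sign; the degenerate case $\sigma_0^2 = \sigma_1^2$ is excluded as noted in the preceding remark. There is no real obstacle here — the result is an immediate algebraic consequence of the likelihood ratio already derived; the only point requiring mild care is tracking the sign of $\sigma_1^2 - \sigma_0^2$ so that the threshold is written in a single form valid for both orderings, which is why it is natural to present it as in (\ref{eq:optimalth}) rather than splitting into cases.
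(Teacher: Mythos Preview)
Your proposal is correct and follows exactly the paper's approach: the paper's proof is the single sentence ``The threshold is obtained from (\ref{eq:thcond}) by solving $\Lambda(\boldsymbol{y})=1$,'' and you simply fill in the algebra (take logarithms, isolate $Z$) and add a well-placed positivity check. There is nothing to add or correct.
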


\begin{proof}
The threshold is obtained from (\ref{eq:thcond}) by solving $\Lambda(\boldsymbol{y})=1$.
\end{proof}

We summarize  the optimal ML detector in Algorithm \ref{alg:optimal detector1}.
\begin{algorithm}[h!]
\caption{Optimal Detector}
\label{alg:optimal detector1}
\begin{algorithmic}[1]
\REQUIRE{The received signal vectors at the reader, $\boldsymbol y$.}
\ENSURE{The detected result of the transmitted symbol of the tag, $\hat{d}$.}
\STATE Calculate the signal energy $Z=\|\boldsymbol y\|^2$;\\
\STATE Obtain the parameters $\sigma_0^2$ and $\sigma_1^2$, and calculate the detection threshold $T_h^{\mathrm{CG-op}}$;\\
\IF {$\sigma_0^2>\sigma_1^2$}
\STATE
{
\textbf{if}
{$Z\geq T_h^{\mathrm{CG-op}}$}~~~
\textbf{then}
$\hat{d}=0$~~~
\textbf{else}
$\hat{d}=1$~~~
\textbf{end if}
}
\ELSE
\STATE
{
\textbf{if}
{$Z\leq T_h^{\mathrm{CG-op}}$}~~~
\textbf{then}
$\hat{d}=0$~~~
\textbf{else}
$\hat{d}=1$~~~
\textbf{end if}
}
\ENDIF
\RETURN $\hat{d}$
\end{algorithmic}
\end{algorithm}

\begin{theorem}
The BER of the optimal ML detector can be expressed as
\begin{align}\label{eq:optimalpb}
P_{b}^{\mathrm{CG-op}} &
=\frac{1}{2\Gamma(N)}\left[\gamma\left(N,\frac{N\sigma_{\min}^2}{\sigma_1^2-\sigma_0^2}\ln\frac{\sigma_1^2}{\sigma_0^2}\right)
+\Gamma\left(N,\frac{N\sigma_{\max}^2}{\sigma_1^2-\sigma_0^2}\ln\frac{\sigma_1^2}{\sigma_0^2}\right)\right],
\end{align}
where $\sigma_{\max}^2=\max\{\sigma_0^2,\sigma_1^2\}$, $\sigma_{\min}^2=\min\{\sigma_0^2,\sigma_1^2\}$, and
\begin{align}
\!\gamma(N,x)=\!\int_0^x \!t^{N-1} \mathrm{e}^{-t}\mathrm{d}t~~~~\textrm{and}~~~~\Gamma(N,x)=\!\int_x^\infty \!t^{N-1} \mathrm{e}^{-t}\mathrm{d}t
\end{align}
denote the lower and the upper incomplete gamma functions, respectively.

\end{theorem}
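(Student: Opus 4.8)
The plan is to use the fact --- already visible from the likelihood ratio derivation --- that under $\mathcal{H}_i$ the sufficient statistic $Z=\|\boldsymbol{y}\|^2$ is the sum of $N$ independent squared magnitudes of $\mathcal{CN}(0,\sigma_i^2)$ entries, cf. (\ref{eq:ydistri}). Consequently $Z/\sigma_i^2$ is a unit-scale Gamma variable of shape $N$ (equivalently $2Z/\sigma_i^2\sim\chi^2_{2N}$), with density $p(z\,|\,\mathcal{H}_i)=z^{N-1}\mathrm{e}^{-z/\sigma_i^2}/\bigl(\Gamma(N)\sigma_i^{2N}\bigr)$ on $z\ge 0$. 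Since $d=0$ and $d=1$ are equiprobable, $P_b^{\mathrm{CG-op}}=\tfrac12\Pr\{\hat d=1\,|\,\mathcal{H}_0\}+\tfrac12\Pr\{\hat d=0\,|\,\mathcal{H}_1\}$, so the whole computation reduces to integrating these two Gamma densities over the decision regions prescribed by (\ref{eq:thcond}) and then substituting the threshold of Theorem 1.

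First I would handle the regime $\sigma_0^2>\sigma_1^2$. Here $\hat d=1$ precisely when $Z<T_h^{\mathrm{CG-op}}$, hence $\Pr\{\hat d=1\,|\,\mathcal{H}_0\}=\int_0^{T_h^{\mathrm{CG-op}}}p(z\,|\,\mathcal{H}_0)\,\mathrm{d}z$ and $\Pr\{\hat d=0\,|\,\mathcal{H}_1\}=\int_{T_h^{\mathrm{CG-op}}}^{\infty}p(z\,|\,\mathcal{H}_1)\,\mathrm{d}z$. The change of variable $t=z/\sigma_0^2$ in the first integral and $t=z/\sigma_1^2$ in the second turns them, by definition of the incomplete gamma functions, into $\gamma\bigl(N,\,T_h^{\mathrm{CG-op}}/\sigma_0^2\bigr)/\Gamma(N)$ and $\Gamma\bigl(N,\,T_h^{\mathrm{CG-op}}/\sigma_1^2\bigr)/\Gamma(N)$. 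Inserting $T_h^{\mathrm{CG-op}}$ from Theorem 1 gives $T_h^{\mathrm{CG-op}}/\sigma_0^2=\tfrac{N\sigma_1^2}{\sigma_1^2-\sigma_0^2}\ln\tfrac{\sigma_1^2}{\sigma_0^2}$ and $T_h^{\mathrm{CG-op}}/\sigma_1^2=\tfrac{N\sigma_0^2}{\sigma_1^2-\sigma_0^2}\ln\tfrac{\sigma_1^2}{\sigma_0^2}$; since in this regime $\sigma_1^2=\sigma_{\min}^2$ and $\sigma_0^2=\sigma_{\max}^2$, these are exactly the arguments in (\ref{eq:optimalpb}), and averaging yields the claimed formula.

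Next I would repeat this for $\sigma_0^2<\sigma_1^2$, where the decision inequality in (\ref{eq:thcond}) is reversed: now $\hat d=1$ when $Z>T_h^{\mathrm{CG-op}}$, so the two conditional error probabilities become $\Gamma\bigl(N,\,T_h^{\mathrm{CG-op}}/\sigma_0^2\bigr)/\Gamma(N)$ and $\gamma\bigl(N,\,T_h^{\mathrm{CG-op}}/\sigma_1^2\bigr)/\Gamma(N)$. In this regime $\sigma_0^2=\sigma_{\min}^2$ and $\sigma_1^2=\sigma_{\max}^2$, so after substituting the same threshold the identical pair of terms reappears, only with the roles of $\gamma$ and $\Gamma$ interchanged; their sum, and hence $P_b^{\mathrm{CG-op}}$, is unchanged. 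Factoring out $\tfrac1{2\Gamma(N)}$ then produces (\ref{eq:optimalpb}) in both regimes.

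I do not anticipate a real obstacle: the argument is just the evaluation of two Gamma tail integrals at the Theorem 1 threshold. The one point that deserves care is the bookkeeping of the two regimes $\sigma_0^2\gtrless\sigma_1^2$ --- one must verify that the flip of the decision inequality and the swap of $\sigma_{\min}^2,\sigma_{\max}^2$ conspire so that both regimes collapse to the same symmetric expression, which is precisely what legitimizes the $\min/\max$ form of (\ref{eq:optimalpb}). As a safeguard I would also note that $(\sigma_1^2-\sigma_0^2)\ln(\sigma_1^2/\sigma_0^2)>0$ always, so $T_h^{\mathrm{CG-op}}>0$, the argument of $\gamma(N,\cdot)$ is nonnegative, and the decision regions are nonempty in either regime.
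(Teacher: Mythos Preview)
Your proposal is correct and follows essentially the same route as the paper: split into the two regimes $\sigma_0^2\gtrless\sigma_1^2$, use that $Z$ has a Gamma/chi-square density under each hypothesis to express the two conditional error probabilities as incomplete gamma functions of $T_h^{\mathrm{CG-op}}/\sigma_i^2$, and then observe that the two cases merge into the single $\sigma_{\min}^2,\sigma_{\max}^2$ expression. Your extra bookkeeping remark and the positivity check on $T_h^{\mathrm{CG-op}}$ are welcome additions but do not change the argument.
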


\begin{proof}
According to (\ref{eq:thcond}), for the case of $\sigma_0^2>\sigma_1^2$, the BER can be derived as
\begin{align}\label{eq:theorypb0}
P_{b}^{\mathrm{CG-op}} &= \Pr(\mathcal{H}_0)\Pr(Z\leq T_{h}^{\mathrm{CG-op}}|\mathcal{H}_0) + \Pr(\mathcal{H}_1)\Pr(Z\geq T_{h}^{\mathrm{CG-op}}|\mathcal{H}_1) \nonumber\\
     &= \frac{1}{2}\int_0^{T_{h}^{\mathrm{CG-op}}} f_Z(z|\mathcal{H}_0) \mathrm{d}z + \frac{1}{2}\int_{T_{h}^{\mathrm{CG-op}}}^\infty f_Z(z|\mathcal{H}_1) \mathrm{d}z,
\end{align}
where $f_Z(z|\mathcal{H}_i)$ is the PDF of $Z$ under the hypothesis $\mathcal{H}_i$.

It can be readily  known that $Z$ is a central chi-square random variable with $2N$ degrees of freedom (DOF). Then, there is  \cite{refs:pdigrv}
\begin{align}
f_Z(z|\mathcal{H}_i)=\frac{z^{N-1}\mathrm{e}^{-\frac{z}{\sigma_i^2}}}{\Gamma(N){\sigma_i}^{2N}},~~~~~~~i=0,1,
\end{align}
where $\Gamma(\cdot)$ denotes the gamma function. Then the BER (\ref{eq:theorypb0}) is further derived as
\begin{align}\label{eq:theorypb1}
P_{b}^{\mathrm{CG-op}} &= \frac{1}{2\Gamma(N)}\left[\gamma\left(N,\frac{T_{h}^{\mathrm{CG-op}}}{\sigma_0^2}\right)+\Gamma\left(N,\frac{T_{h}^{\mathrm{CG-op}}}{\sigma_1^2}\right)\right].
\end{align}

Similarly, for the case of $\sigma_0^2<\sigma_1^2$, the corresponding BER is  obtained as
\begin{align}\label{eq:theorypb2}
P_{b}^{\mathrm{CG-op}} &= \frac{1}{2\Gamma(N)}\left[\Gamma\left(N,\frac{T_{h}^{\mathrm{CG-op}}}{\sigma_0^2}\right)+\gamma\left(N,\frac{T_{h}^{\mathrm{CG-op}}}{\sigma_1^2}\right)\right].
\end{align}
Moreover, (\ref{eq:theorypb1}) and (\ref{eq:theorypb2}) can be integrated into one, and thus we obtain (\ref{eq:optimalpb}).
\end{proof}

For relatively large $N$, there are approximations \cite{refs:homfwfgamt}:
\begin{align}\label{eq:approximation}
\frac{\gamma(N,x)}{\Gamma(N)}\approx 1-Q(x_1),~~~~~~\frac{\Gamma(N,x)}{\Gamma(N)}\approx Q(x_1),~~~~~~\textrm{and}~~~~~~x_1=\frac{x}{\sqrt{N}}-\sqrt{N}.
\end{align}
We can further approximate the expression in (\ref{eq:optimalpb}) as
\begin{align}\label{eq:approtheoreticalpb1}
P_{b}^{\mathrm{CG-op}} &\approx\frac{1}{2}Q\left(\sqrt{N}-\frac{\sqrt{N}\sigma_{\min}^2}{\sigma_0^2-\sigma_1^2}\ln\frac{\sigma_0^2}{\sigma_1^2}\right)+\frac{1}{2}Q\left(\frac{\sqrt{N}\sigma_{\max}^2}{\sigma_0^2-\sigma_1^2}\ln\frac{\sigma_0^2}{\sigma_1^2}-\sqrt{N}\right),
\end{align}
which indicates  that the difference between $\sigma_0^2$ and $\sigma_1^2$ may be a crucial factor to the detection performance.

\begin{rem}
The optimal detector may not obtain the same error probability for $\mathcal{H}_0$ and $\mathcal{H}_1$, i.e., $\Pr(\hat{d}=1|\mathcal{H}_0)\neq\Pr(\hat{d}=0|\mathcal{H}_1)$, which is generally referred as the unbalanced BER \cite{refs:obcfwedouos}.  In some case, a balanced BER detector\footnote{Balanced BER means that there is not any distinction introduced by the detection method to the status of different bits, and thus the BER performance does not rely on the detection method.} is required for. Referring to (\ref{eq:theorypb1}) and (\ref{eq:theorypb2}), the balanced BER detector with its threshold $T_h^{\mathrm{ba}}$ can be achieved from
\begin{align}\label{eq:equthresh}
\gamma\left(N,\frac{T_h^{\mathrm{ba}}}{\sigma_{\max}^2}\right)=\Gamma\left(N,\frac{T_h^{\mathrm{ba}}}{\sigma_{\min}^2}\right),
\end{align}
where it is difficult to get the exact solution of $T_h^{\mathrm{ba}}$. However, with the approximation in (\ref{eq:approximation}), we can further rewrite  (\ref{eq:equthresh}) as
\begin{align}\label{eq:appequthresh}
Q\left(\sqrt{N}-\frac{T_h^{\mathrm{ba}}}{\sqrt{N}\sigma_{\max}^2}\right)=Q\left(\frac{T_h^{\mathrm{ba}}}{\sqrt{N}\sigma_{\min}^2}-\sqrt{N}\right),
\end{align}
and obtain the threshold for the balanced BER detector
\begin{align}\label{eq:equalthresh}
T_h^{\mathrm{ba}}=\frac{2N\sigma_0^2\sigma_1^2}{\sigma_0^2+\sigma_1^2}.
\end{align}
\end{rem}

\subsection{Suboptimal Detector with the Complex Gaussian Ambient Source}
From (\ref{eq:optimalpb}) or (\ref{eq:approtheoreticalpb1}), we cannot obtain a clear clue about how the system parameters will affect the detection performance. Thus, we here design a suboptimal detector which does not gain any undesirable performance loss, but requires less prior knowledge of the ambient RF signal and yields a simpler and more informative BER expression.

From (\ref{eq:thcond}), we know the energy of the received signal vector  $Z$ is the key statistics of the detection, and the energy detection with a proper threshold could be the optimal detection. Thus, the decision metric can be switched from the PDF of $\boldsymbol{y}$ to PDF of $Z$. From another perspective, the energy $Z=\sum\limits_{n=1}^N|y[n]|^2$ can also be regarded as the sum of $N$ independent 2-DOF central chi-square random variables with the identical mean $\sigma_i^2$ and variance $\sigma_i^4$ under the hypothesis $\mathcal{H}_i$. When $N$ is relatively large\footnote{ Normally, $N = 30$ is adequate for most applications. However, if the PDF of $|y[n]|^2$ is smooth, then the value of $N$ as low as 5 can be used \cite{refs:prvasp}.}, $Z$ asymptotically becomes a Gaussian random variable from the central limit theorem \cite{refs:acovrdm}.
Then the distribution of $Z$ under hypothesis $\mathcal{H}_i$ can be approximated as $Z|\mathcal{H}_i\sim\mathcal{N}(\mu^{\mathrm{CG}}_i,\varsigma^{\mathrm{CG}}_i)$ with PDF
\begin{align}
\tilde{f}_{Z}(z|\mathcal{H}_i)=\frac{1}{\sqrt{2\pi\varsigma_i^{\mathrm{CG}}}}\exp\left[-\frac{\left(z-\mu^{\mathrm{CG}}_i\right)^2}{2\varsigma_i^{\mathrm{CG}}}\right],~~~~~~~~i=0,1,
\end{align}
where
\begin{align}\label{eq:hatmu}
\mu^{\mathrm{CG}}_i=N\sigma_i^2,~~~~~~~~\varsigma^{\mathrm{CG}}_i=N\sigma_i^4,~~~~~~~~i=0,1,
\end{align}
are the means and the variances of $Z$ under the hypothesis $\mathcal{H}_i$, respectively.

The detection rule for the suboptimal detector is reformulated as
\begin{align}\label{eq:hatthcond}
\tilde{f}_{Z}(z|\mathcal{H}_0)\mathop{\gtrless}^{\mathcal{H}_0}_{\mathcal{H}_1}\tilde{f}_{Z}(z|\mathcal{H}_1) \Longleftrightarrow
\left\{\begin{array}{ll}
\displaystyle {Z}\mathop{\gtrless}^{\mathcal{H}_0}_{\mathcal{H}_1}T_{h}^{\mathrm{CG-sub}},&~~~~\sigma_0^2>\sigma_1^2,\\[2 mm]
\displaystyle {Z}\mathop{\lessgtr}^{\mathcal{H}_0}_{\mathcal{H}_1}T_{h}^{\mathrm{CG-sub}},&~~~~\sigma_0^2<\sigma_1^2.
\end{array}\right.
\end{align}
Namely, the suboptimal detector is also a type of energy detection but with a different threshold from the optimal one (\ref{eq:optimalth}).

\subsubsection{General Case}
We first present the general case of the suboptimal detection.
\begin{theorem}
The threshold for the suboptimal detector can be expressed as
\begin{align}\label{eq:suboptimalth}
T_h^{\mathrm{CG-sub}}  = \frac{N\sigma_0^2\sigma_1^2}{\sigma_0^2+\sigma_1^2}\left[1+\sqrt{1+\frac{2(\sigma_0^2+\sigma_1^2)}{N(\sigma_1^2-\sigma_0^2)}\ln\frac{\sigma_1^2}{\sigma_0^2}}\right].
\end{align}
\end{theorem}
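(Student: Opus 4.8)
The plan is to determine $T_h^{\mathrm{CG-sub}}$ in the same spirit as $T_h^{\mathrm{CG-op}}$ in the preceding theorem: by (\ref{eq:hatthcond}) the threshold is exactly the value of the energy at which the two decision metrics coincide, i.e.\ the solution of $\tilde f_Z(z|\mathcal{H}_0)=\tilde f_Z(z|\mathcal{H}_1)$. First I would take logarithms of this equality of two Gaussian densities; since $z$ appears only inside the exponents, the equation collapses to the quadratic
\begin{align}
\left(\frac{1}{\varsigma_0^{\mathrm{CG}}}-\frac{1}{\varsigma_1^{\mathrm{CG}}}\right)z^2 -2\left(\frac{\mu_0^{\mathrm{CG}}}{\varsigma_0^{\mathrm{CG}}}-\frac{\mu_1^{\mathrm{CG}}}{\varsigma_1^{\mathrm{CG}}}\right)z +\frac{(\mu_0^{\mathrm{CG}})^2}{\varsigma_0^{\mathrm{CG}}}-\frac{(\mu_1^{\mathrm{CG}})^2}{\varsigma_1^{\mathrm{CG}}} =\ln\frac{\varsigma_1^{\mathrm{CG}}}{\varsigma_0^{\mathrm{CG}}}. \nonumber
\end{align}

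Next I would substitute $\mu_i^{\mathrm{CG}}=N\sigma_i^2$ and $\varsigma_i^{\mathrm{CG}}=N\sigma_i^4$ from (\ref{eq:hatmu}) and simplify. The convenient identities are $\mu_i^{\mathrm{CG}}/\varsigma_i^{\mathrm{CG}}=1/\sigma_i^2$, $(\mu_i^{\mathrm{CG}})^2/\varsigma_i^{\mathrm{CG}}=N$ (so the constant term on the left vanishes), and $\ln(\varsigma_1^{\mathrm{CG}}/\varsigma_0^{\mathrm{CG}})=2\ln(\sigma_1^2/\sigma_0^2)$; after clearing denominators the equation becomes
\begin{align}
\frac{(\sigma_1^2-\sigma_0^2)(\sigma_1^2+\sigma_0^2)}{N\sigma_0^4\sigma_1^4}\,z^2 -\frac{2(\sigma_1^2-\sigma_0^2)}{\sigma_0^2\sigma_1^2}\,z -2\ln\frac{\sigma_1^2}{\sigma_0^2}=0. \nonumber
\end{align}
Applying the quadratic formula, the factor $(\sigma_1^2-\sigma_0^2)$ --- nonzero with probability one --- cancels between the quadratic and linear coefficients and produces the prefactor $N\sigma_0^2\sigma_1^2/(\sigma_0^2+\sigma_1^2)$, while the discriminant factorises as $\tfrac{4(\sigma_1^2-\sigma_0^2)^2}{\sigma_0^4\sigma_1^4}\bigl[1+\tfrac{2(\sigma_0^2+\sigma_1^2)}{N(\sigma_1^2-\sigma_0^2)}\ln\tfrac{\sigma_1^2}{\sigma_0^2}\bigr]$; taking its square root and keeping the $+$ branch gives precisely (\ref{eq:suboptimalth}).

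The step demanding the most care is the treatment of the square root. First one must verify that the bracketed term in the discriminant is positive for every admissible parameter set, so that the root is real: this holds because $\ln(\sigma_1^2/\sigma_0^2)$ and $\sigma_1^2-\sigma_0^2$ always have the same sign, hence their quotient is positive and $1+(\text{positive})>0$. Second, of the two solutions only the one carrying the $+$ sign is an admissible threshold, because it must be a value of the energy $Z=\|\boldsymbol y\|^2\ge 0$ at which the decision in (\ref{eq:hatthcond}) flips, whereas the $-$ root equals $\tfrac{N\sigma_0^2\sigma_1^2}{\sigma_0^2+\sigma_1^2}\bigl[1-\sqrt{1+(\text{positive})}\bigr]<0$ and is therefore discarded. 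Finally I would note that (\ref{eq:suboptimalth}) is invariant under interchanging $\sigma_0^2$ and $\sigma_1^2$ --- both the prefactor and the ratio $\ln(\sigma_1^2/\sigma_0^2)/(\sigma_1^2-\sigma_0^2)$ are symmetric --- so a single closed-form expression covers both the $\sigma_0^2>\sigma_1^2$ and $\sigma_0^2<\sigma_1^2$ branches of (\ref{eq:hatthcond}), which is why it can be stated as one threshold.
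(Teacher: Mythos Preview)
Your proposal is correct and follows essentially the same route as the paper: set $\tilde f_Z(T|\mathcal H_0)=\tilde f_Z(T|\mathcal H_1)$, take logarithms to obtain a quadratic in $T$, and retain the positive root. The paper presents the quadratic in the generic form $c_1T^2+c_2T+c_3=0$ with coefficients in terms of $\mu_i^{\mathrm{CG}},\varsigma_i^{\mathrm{CG}}$ and simply remarks that only the positive root is admissible; you instead substitute $\mu_i^{\mathrm{CG}}=N\sigma_i^2$, $\varsigma_i^{\mathrm{CG}}=N\sigma_i^4$ before solving, and you add the justification that the discriminant is positive, the explicit reason the $-$ branch is negative, and the symmetry observation---all of which are sound refinements but do not change the underlying argument.
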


\begin{proof}
The threshold $T_{h}^{\mathrm{CG-sub}}$ for the suboptimal detector can be computed from
\begin{align}\label{eq:mldetector}
\tilde{f}_{Z}(T_{h}^{\mathrm{CG-sub}}|\mathcal{H}_0)=\tilde{f}_{Z}(T_{h}^{\mathrm{CG-sub}}|\mathcal{H}_1).
\end{align}
Taking the natural logarithm of both sides of (\ref{eq:mldetector}) and rearranging the terms, we obtain
\begin{align}\label{eq:polynomial}
c_1 {(T_{h}^{\mathrm{CG-sub}}})^2+c_2T_{h}^{\mathrm{CG-sub}}+c_3=0,
\end{align}
where
\begin{align}
&c_1=\varsigma^{\mathrm{CG}}_1-\varsigma^{\mathrm{CG}}_0,~~~~~~~~~~c_2=2(\mu^{\mathrm{CG}}_1\varsigma^{\mathrm{CG}}_0-\mu^{\mathrm{CG}}_0\varsigma^{\mathrm{CG}}_1),\\
&c_3=(\mu^{\mathrm{CG}}_0)^2\varsigma^{\mathrm{CG}}_1-(\mu^{\mathrm{CG}}_1)^2\varsigma^{\mathrm{CG}}_0-\varsigma^{\mathrm{CG}}_0\varsigma^{\mathrm{CG}}_1\ln\frac{\varsigma^{\mathrm{CG}}_1}{\varsigma^{\mathrm{CG}}_0}.
\end{align}

As $T_{h}^{\mathrm{CG-sub}}$ is the detection threshold of the received signal energy, only the positive root of (\ref{eq:polynomial}) is valid, which gives the threshold (\ref{eq:suboptimalth}).
\end{proof}

We next demonstrate the BER performance of the suboptimal detector, which tells more insight of the performance-affected parameters and would help design the system parameters.
\begin{theorem}\label{theorem:cgsubber}
The BER for the suboptimal detector can be expressed as
\begin{align}\label{eq:gaussianpb}
P_{b}^{\mathrm{CG-sub}}=\frac{1}{2}-\frac{1}{2}Q\left(\frac{T_{h}^{\mathrm{CG-sub}}-N\sigma^2_{\max}}{\sqrt{N}\sigma_{\max}^2}\right)+\frac{1}{2}Q\left(\frac{T_{h}^{\mathrm{CG-sub}}-N\sigma^2_{\min}}{\sqrt{N}\sigma_{\min}^2}\right).
\end{align}
\end{theorem}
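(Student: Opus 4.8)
The plan is to compute the BER directly from its definition as a sum of two conditional error probabilities, exactly as in the proof of Theorem 2, but now using the Gaussian approximation $Z|\mathcal{H}_i\sim\mathcal{N}(N\sigma_i^2,N\sigma_i^4)$ rather than the exact chi-square law. First I would write, for the case $\sigma_0^2>\sigma_1^2$ (so that $\sigma_{\max}^2=\sigma_0^2$ and $\sigma_{\min}^2=\sigma_1^2$), the decision rule from (\ref{eq:hatthcond}): decide $\mathcal{H}_0$ when $Z\geq T_h^{\mathrm{CG-sub}}$. Hence the error probability is
\begin{align}
P_{b}^{\mathrm{CG-sub}}
&=\frac{1}{2}\Pr\!\left(Z<T_h^{\mathrm{CG-sub}}\,\big|\,\mathcal{H}_0\right)
+\frac{1}{2}\Pr\!\left(Z\geq T_h^{\mathrm{CG-sub}}\,\big|\,\mathcal{H}_1\right).
\end{align}
Under $\mathcal{H}_0$, the variable $(Z-N\sigma_0^2)/(\sqrt{N}\sigma_0^2)$ is standard normal, so the first term equals $\tfrac12\left[1-Q\!\big((T_h^{\mathrm{CG-sub}}-N\sigma_0^2)/(\sqrt{N}\sigma_0^2)\big)\right]$; under $\mathcal{H}_1$, the second term equals $\tfrac12 Q\!\big((T_h^{\mathrm{CG-sub}}-N\sigma_1^2)/(\sqrt{N}\sigma_1^2)\big)$. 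Substituting $\sigma_0^2=\sigma_{\max}^2$ and $\sigma_1^2=\sigma_{\min}^2$ gives exactly (\ref{eq:gaussianpb}).

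Next I would treat the case $\sigma_0^2<\sigma_1^2$, where the decision rule flips: decide $\mathcal{H}_0$ when $Z\leq T_h^{\mathrm{CG-sub}}$. Now $\sigma_{\max}^2=\sigma_1^2$ and $\sigma_{\min}^2=\sigma_0^2$, and the error probability becomes $\tfrac12\Pr(Z>T_h^{\mathrm{CG-sub}}\mid\mathcal{H}_0)+\tfrac12\Pr(Z\leq T_h^{\mathrm{CG-sub}}\mid\mathcal{H}_1)$. The first term is $\tfrac12 Q\!\big((T_h^{\mathrm{CG-sub}}-N\sigma_0^2)/(\sqrt{N}\sigma_0^2)\big)=\tfrac12 Q\!\big((T_h^{\mathrm{CG-sub}}-N\sigma_{\min}^2)/(\sqrt{N}\sigma_{\min}^2)\big)$ and the second is $\tfrac12\left[1-Q\!\big((T_h^{\mathrm{CG-sub}}-N\sigma_1^2)/(\sqrt{N}\sigma_1^2)\big)\right]=\tfrac12\left[1-Q\!\big((T_h^{\mathrm{CG-sub}}-N\sigma_{\max}^2)/(\sqrt{N}\sigma_{\max}^2)\big)\right]$. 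Adding these reproduces (\ref{eq:gaussianpb}) verbatim, so the two cases merge into the single stated formula and the use of $\sigma_{\max}^2,\sigma_{\min}^2$ is precisely what unifies them.

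The only genuine subtlety — and the step I expect to require the most care — is bookkeeping the direction of the inequality in the decision rule against which of $\sigma_0^2,\sigma_1^2$ is the larger, so that the "$1-Q(\cdot)$" contribution always attaches to the $\sigma_{\max}^2$ argument and the bare "$Q(\cdot)$" to the $\sigma_{\min}^2$ argument in both cases; a sign slip here would swap the roles and give a wrong (in fact, larger than $1/2$) expression. Everything else is a routine standardization of a Gaussian tail probability into the $Q$-function, using $Q(-x)=1-Q(x)$ where needed. No property of $T_h^{\mathrm{CG-sub}}$ beyond its being the threshold is used, so the formula (\ref{eq:gaussianpb}) holds for the value given in (\ref{eq:suboptimalth}) as a special case.
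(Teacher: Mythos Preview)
Your proposal is correct and follows essentially the same route as the paper: split into the two cases $\sigma_0^2\gtrless\sigma_1^2$, evaluate each conditional error probability via the Gaussian approximation $Z|\mathcal{H}_i\sim\mathcal{N}(N\sigma_i^2,N\sigma_i^4)$ and the $Q$-function, and then merge the two resulting expressions into the single formula (\ref{eq:gaussianpb}) by replacing $\sigma_0^2,\sigma_1^2$ with $\sigma_{\max}^2,\sigma_{\min}^2$. The bookkeeping caution you flag is exactly the only point requiring attention, and you have it right.
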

\begin{proof}
According to (\ref{eq:hatthcond}), if $\sigma_0^2>\sigma_1^2$, the corresponding BER is
\begin{align}\label{eq:gaussianpb1}
P_{b}^{\mathrm{CG-sub}}  &= \Pr(\mathcal{H}_0)\Pr(Z<T_{h}^{\mathrm{CG-sub}}|\mathcal{H}_0) + \Pr(\mathcal{H}_1)\Pr(Z>T_{h}^{\mathrm{CG-sub}}|\mathcal{H}_1) \nonumber\\
    &= \frac{1}{2}\int_{-\infty}^{T_{h}^{\mathrm{CG-sub}}} \tilde{f}_{Z}(z|\mathcal{H}_0) \mathrm{d}z + \frac{1}{2}\int_{T_{h}^{\mathrm{CG-sub}}}^\infty \tilde{f}_{Z}(z|\mathcal{H}_1) \mathrm{d}z  \nonumber\\
    &= \frac{1}{2}-\frac{1}{2}Q\left(\frac{T_{h}^{\mathrm{CG-sub}}-\mu^{\mathrm{CG}}_0}{\sqrt{\varsigma^{\mathrm{CG}}_0}}\right)
    +\frac{1}{2}Q\left(\frac{T_{h}^{\mathrm{CG-sub}}-\mu^{\mathrm{CG}}_1}{\sqrt{\varsigma^{\mathrm{CG}}_1}}\right).
\end{align}
If $\sigma_0^2<\sigma_1^2$, the BER is similarly derived as
\begin{align}\label{eq:gaussianpb2}
P_{b}^{\mathrm{CG-sub}}=\frac{1}{2}Q\left(\frac{T_{h}^{\mathrm{CG-sub}}-\mu^{\mathrm{CG}}_0}{\sqrt{\varsigma^{\mathrm{CG}}_0}}\right)
+\frac{1}{2}-\frac{1}{2}Q\left(\frac{T_{h}^{\mathrm{CG-sub}}-\mu^{\mathrm{CG}}_1}{\sqrt{\varsigma^{\mathrm{CG}}_1}}\right).
\end{align}
Therefore, the BER (\ref{eq:gaussianpb}) is obtained by integrating (\ref{eq:gaussianpb1}) and (\ref{eq:gaussianpb2}) into one.
\end{proof}

\subsubsection{Special Case with Large $N$}

We next focus on analyzing the special case with large $N$, where much more results can be obtained.
\begin{corollary}
For a relatively large value of $N$, the asymptotic one of (\ref{eq:suboptimalth}) is expressed as
\begin{align}\label{eq:ta}
\tilde{T}_h^{\mathrm{CG-sub}} \approx\frac{2N\sigma_0^2\sigma_1^2}{\sigma_0^2+\sigma_1^2},
\end{align}
and the asymptotic BER is given by
\begin{align}\label{eq:cgapb}
\tilde{P}_{b}^{\mathrm{CG-sub}}
= Q\left(\frac{\sqrt{N}|\sigma_1^2-\sigma_0^2|}{\sigma_0^2+\sigma_1^2}\right)
= Q\left(\frac{\sqrt{N}\Delta}{\Sigma+2/\gamma}\right),
\end{align}
where
\begin{align}
\gamma=\frac{P_s}{N_w},~~~~\Delta=||h|_0^2-|h_1|^2|,~~~~\Sigma=|h_0|^2+|h_1|^2.
\end{align}
\end{corollary}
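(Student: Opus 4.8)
\emph{Proof proposal.} The plan is to read off the asymptotic threshold directly from the exact expression (\ref{eq:suboptimalth}) and then substitute it into the general BER formula (\ref{eq:gaussianpb}), exploiting the symmetry of the $Q$-function. First I would note that in (\ref{eq:suboptimalth}) the only $N$-dependence beyond the prefactor lives inside the radical, in the term $\frac{2(\sigma_0^2+\sigma_1^2)}{N(\sigma_1^2-\sigma_0^2)}\ln\frac{\sigma_1^2}{\sigma_0^2}$. Since $\sigma_0^2,\sigma_1^2$ are fixed and $\sigma_0^2\ne\sigma_1^2$, the quantity $\frac{\sigma_0^2+\sigma_1^2}{\sigma_1^2-\sigma_0^2}\ln\frac{\sigma_1^2}{\sigma_0^2}$ is a finite constant, so this term is $O(1/N)$ and the radical tends to $1$ as $N\to\infty$; this immediately yields $\tilde T_h^{\mathrm{CG-sub}}\approx\frac{2N\sigma_0^2\sigma_1^2}{\sigma_0^2+\sigma_1^2}$, which is (\ref{eq:ta}) and, incidentally, coincides with the balanced-BER threshold (\ref{eq:equalthresh}).

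Next I would set $T_h^{\mathrm{CG-sub}}=\tilde T_h^{\mathrm{CG-sub}}$ in (\ref{eq:gaussianpb}) and simplify the two $Q$-arguments using $\sigma_0^2\sigma_1^2=\sigma_{\max}^2\sigma_{\min}^2$. The argument attached to $\sigma_{\max}^2$ reduces to $\sqrt{N}\bigl(\tfrac{2\sigma_{\min}^2}{\sigma_{\max}^2+\sigma_{\min}^2}-1\bigr)=-\sqrt{N}\,\tfrac{\sigma_{\max}^2-\sigma_{\min}^2}{\sigma_{\max}^2+\sigma_{\min}^2}$, while the one attached to $\sigma_{\min}^2$ reduces to $+\sqrt{N}\,\tfrac{\sigma_{\max}^2-\sigma_{\min}^2}{\sigma_{\max}^2+\sigma_{\min}^2}$. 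Applying $Q(-x)=1-Q(x)$ to the first, the constant $\tfrac12$ and the term $-\tfrac12\cdot 1$ cancel, and the two surviving $\tfrac12 Q(\cdot)$ pieces merge into the single term $Q\bigl(\sqrt{N}\,\tfrac{\sigma_{\max}^2-\sigma_{\min}^2}{\sigma_{\max}^2+\sigma_{\min}^2}\bigr)=Q\bigl(\tfrac{\sqrt{N}|\sigma_1^2-\sigma_0^2|}{\sigma_0^2+\sigma_1^2}\bigr)$, i.e. the first equality in (\ref{eq:cgapb}).

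Finally I would convert to channel/SNR parameters. Substituting $\sigma_i^2=|h_i|^2P_s+N_w$ from (\ref{eq:sigma2}) gives $|\sigma_1^2-\sigma_0^2|=\Delta P_s$ and $\sigma_0^2+\sigma_1^2=\Sigma P_s+2N_w$; dividing numerator and denominator by $P_s$ and writing $\gamma=P_s/N_w$ produces $\tilde P_b^{\mathrm{CG-sub}}=Q\bigl(\tfrac{\sqrt{N}\Delta}{\Sigma+2/\gamma}\bigr)$, the second equality in (\ref{eq:cgapb}).

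I do not anticipate a serious obstacle; the one point requiring care is handling the two orderings $\sigma_0^2\gtrless\sigma_1^2$ at once. This is exactly what the $\max/\min$ bookkeeping in (\ref{eq:gaussianpb}) together with the identity $Q(-x)=1-Q(x)$ takes care of: in each case the ``wrong-side'' $Q$-argument carries a minus sign and, after the reflection identity, the expression collapses to the same symmetric form. It is also worth remarking that, since the asymptotic threshold (\ref{eq:ta}) equals the balanced-BER threshold (\ref{eq:equalthresh}), in this regime $\Pr(\hat d=1|\mathcal{H}_0)=\Pr(\hat d=0|\mathcal{H}_1)$, so a single $Q$ term is exactly what one should expect.
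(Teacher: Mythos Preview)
Your proposal is correct and follows essentially the same route as the paper: the paper's proof simply says to substitute the asymptotic threshold (\ref{eq:ta}) and the expressions for $\sigma_i^2$ from (\ref{eq:sigma2}) into the BER formula (\ref{eq:gaussianpb}), which is precisely what you do. Your write-up is more detailed---you spell out why the $O(1/N)$ term in the radical dies, and you carry through the $Q(-x)=1-Q(x)$ bookkeeping explicitly---but the underlying argument is identical.
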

\begin{proof}
The result (\ref{eq:cgapb}) is easily obtained by substituting the asymptotic threshold (\ref{eq:ta}) and the expressions of $\sigma_i^2$ (\ref{eq:sigma2}) into (\ref{eq:gaussianpb}). Note that $\gamma$ is the signal-to-noise ratio (SNR) of the ambient RF source.
\end{proof}

It can be readily checked that $\tilde{P}_{b}^{\mathrm{CG-sub}}$ is an decreasing function of $\frac{\sqrt{N}\Delta}{\Sigma+2/\gamma}$, i.e., larger SNR, larger $N$, larger $\Delta$, and smaller $\Sigma$ all conduce to better detection performance. It may differ from the conventional binary detection theory where the performance is mainly controlled by SNR and $N$.

\begin{rem}
Different from \cite{refs:udabafabcs} and the proposed  optimal detector (\ref{eq:optimalth}), the suboptimal detector achieves the same error probability for $d_k=0$ and $d_k=1$ at the threshold (\ref{eq:ta}), i.e.,
\begin{align}
&\Pr(\hat{d}=0|\mathcal{H}_1)-\Pr(\hat{d}=1|\mathcal{H}_0)
=\frac{1}{2}-\frac{1}{2}Q\left(\frac{\tilde{T}_{h}^{\mathrm{CG-sub}}-N\sigma^2_0}{\sqrt{N}\sigma_0^2}\right)-\frac{1}{2}Q\left(\frac{\tilde{T}_{h}^{\mathrm{CG-sub}}-N\sigma^2_1}{\sqrt{N}\sigma_1^2}\right)\nonumber\\
&=\frac{1}{2}\left[1-Q\left(\frac{\sqrt{N}(\sigma_1^2-\sigma_0^2)}{\sigma_0^2+\sigma_1^2}\right)-Q\left(\frac{\sqrt{N}(\sigma_0^2-\sigma_1^2)}{\sigma_0^2+\sigma_1^2}\right)\right]
=0.
\end{align}
Moreover, it is readily seen that $T_{h}^{\mathrm{ba}} =\tilde{T}_h^{\mathrm{CG-sub}}$. The suboptimal detector with large $N$ achieves the same performance as the optimal detector with balanced BER.
\end{rem}

By carefully checking (\ref{eq:cgapb}), we find that there exists an irreducible BER in terms of SNR, i.e., when SNR turns to infinity, the BER does not go to zero but will approach an error floor.

\begin{corollary}
As the SNR goes to infinity, the BER of the suboptimal detector meets an error floor at
\begin{align}\label{eq:pbfloor}
P_b^{\textrm{floor}}= Q\left(\frac{\sqrt{N}\Delta}{\Sigma}\right)
           \approx \frac{1}{12}\mathrm{exp}\left(-\frac{N\Delta^2}{2\Sigma^2}\right)+\frac{1}{4}\mathrm{exp}\left(-\frac{2N\Delta^2}{3\Sigma^2}\right).
\end{align}
\end{corollary}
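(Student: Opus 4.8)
The plan is to start directly from the asymptotic BER formula (\ref{eq:cgapb}), namely $\tilde{P}_{b}^{\mathrm{CG-sub}} = Q\!\left(\frac{\sqrt{N}\Delta}{\Sigma+2/\gamma}\right)$, and take the limit $\gamma\to\infty$. Since $\gamma = P_s/N_w$ is the ambient-source SNR, the regime ``SNR goes to infinity'' is exactly $\gamma\to\infty$, which drives the term $2/\gamma$ to zero. Because $Q(\cdot)$ is continuous (indeed smooth and monotone decreasing) and the argument $\frac{\sqrt{N}\Delta}{\Sigma+2/\gamma}$ converges to $\frac{\sqrt{N}\Delta}{\Sigma}$ as $\gamma\to\infty$, we may pass the limit inside, yielding $P_b^{\textrm{floor}} = \lim_{\gamma\to\infty}\tilde{P}_{b}^{\mathrm{CG-sub}} = Q\!\left(\frac{\sqrt{N}\Delta}{\Sigma}\right)$. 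Note this limit is strictly positive whenever $\Delta<\infty$ and $\Sigma>0$, which establishes the existence of a genuine error floor (as opposed to a vanishing BER).

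Next I would invoke the exponential-type approximation of the Gaussian $Q$-function — the two-term bound $Q(x)\approx\frac{1}{12}\mathrm{e}^{-x^2/2}+\frac{1}{4}\mathrm{e}^{-2x^2/3}$, which is standard (e.g., the Chiani--Dardari--Simon form, already referenced in the paper) and is tight for the moderate-to-large arguments expected here since $N$ is assumed relatively large. Substituting $x=\frac{\sqrt{N}\Delta}{\Sigma}$ so that $x^2=\frac{N\Delta^2}{\Sigma^2}$, one immediately gets $x^2/2 = \frac{N\Delta^2}{2\Sigma^2}$ and $2x^2/3 = \frac{2N\Delta^2}{3\Sigma^2}$, which produces exactly the claimed right-hand side of (\ref{eq:pbfloor}).

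The remaining effort is purely bookkeeping: confirming that $\Delta$ and $\Sigma$ as defined are finite and $\Sigma>0$ almost surely (which follows since $|h_0|^2+|h_1|^2>0$ with probability one for the fading channels of the model), and noting that the approximation step introduces only the controlled error inherent to the cited $Q$-function bound, not any additional asymptotic assumption. There is essentially no serious obstacle here; if anything, the only subtlety worth a sentence is emphasizing \emph{why} the floor is nonzero — it is because $\Delta/\Sigma$ stays bounded away from $0$ and $\infty$ regardless of SNR, in contrast to conventional detection where the effective SNR in the $Q$-function argument scales without bound. I would therefore present the corollary's proof as a two-line limit computation followed by the substitution into the known $Q$-function approximation, with a short remark highlighting the structural reason for the error floor.
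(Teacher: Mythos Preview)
Your proposal is correct and follows essentially the same approach as the paper: take the limit $\gamma\to\infty$ in the asymptotic BER expression~(\ref{eq:cgapb}) to eliminate the $2/\gamma$ term, and then apply the Chiani--Dardari--Simon two-term exponential approximation of $Q(x)$ with $x=\sqrt{N}\Delta/\Sigma$. The paper's proof is just these two steps stated tersely; your additional remarks on continuity of $Q$ and the structural reason for the nonzero floor are fine elaborations but not required.
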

\begin{proof}
The result is obtained by omitting the term $\frac{2}{\gamma}$ in (\ref{eq:cgapb}) when SNR turns to infinity, and we utilize a simple but accurate approximation of $Q(x)$ \cite{refs:nebaaftcoepifc}
\begin{align}
Q(x)\approx\frac{1}{12}\mathrm{exp}\left({-\frac{x^2}{2}}\right)+\frac{1}{4}\mathrm{exp}\left({-\frac{2x^2}{3}}\right),~~~~~x\geq0.,
\end{align}
for the approximate equality in (\ref{eq:pbfloor}).
\end{proof}

Clearly, the position of the error floor is related to the value of $N$ and ${\Delta}/{\Sigma}$, where the latter reflect the impacts of the channels.
We then define relative channel difference (RCD) as
\begin{align}
\textrm{RCD}\triangleq\frac{\Delta}{\Sigma}=\frac{||h_0|^2-|h_1|^2|}{|h_0|^2+|h_1|^2}.
\end{align}
Since the detection is mainly based on checking the energies under two different channel situations, when  SNR increases to a certain extent the impact of the high SNR on enlarging the energy difference is not dominant, while the relative difference between the two channel situations, i.e., RCD, will play a very important role for the detection performance.

\begin{definition}
Define the outage probability as the probability of the situation that the instantaneous asymptotic BER exceeds a certain threshold, which is given by
\begin{equation}\label{eq:poutdefine}
P_{\mathrm{out}}=\Pr\left\{\tilde{P}_{b}^{\mathrm{CG-sub}}\geq\zeta\right\}.
\end{equation}
\end{definition}

\begin{theorem}
The outage probability can be computed in closed-form as
\begin{align}\label{eq:outage}
P_{\mathrm{out}}
&=\sum_{m=0}^\infty\frac{\rho^{2m}(1-\rho^2)}{m!}
\gamma\left(m+1,\frac{\lambda_1}{(1-\rho^2)\sigma_{h_0}^2}\right)+
\exp\left(\frac{-\lambda_2}{(1-\rho^2)\sigma_{h_1}^2}\right)\sum_{m=0}^\infty\sum_{n=0}^m\sum_{k=0}^n
\nonumber\\
&~~~\frac{\binom{n}{k}(-1)^k\rho^{2m}\lambda_1^{m+1}\lambda_2^n\sigma_{h_0}^{2k}\sigma_{h_1}^{2(m-n+k+1)}}
{m!n!(1-\rho^2)^{n-k-1}(\lambda_1\sigma_{h_1}^2-\lambda_2\sigma_{h_0}^2)^{m+k+1}}
\Gamma\left(m+k+1,\frac{\lambda_1\sigma_{h_1}^2-\lambda_2\sigma_{h_0}^2}{(1-\rho^2)\sigma_{h_0}^2\sigma_{h_1}^2}\right)-
\nonumber\\
&~~~\exp\left(\frac{-\lambda_1}{(1-\rho^2)\sigma_{h_1}^2}\right)
\sum_{m=0}^\infty\sum_{n=0}^m\sum_{k=0}^n
\frac{(m+k)!\binom{n}{k}(-1)^k\rho^{2m}\lambda_1^n\lambda_2^{m+1}\sigma_{h_0}^{2k}\sigma_{h_1}^{2(m-n+k+1)}}
{m!n!(1-\rho^2)^{n-k-1}\left(\lambda_2\sigma_{h_1}^2-\lambda_1\sigma_{h_0}^2\right)^{m+k+1}},
\end{align}
where
\begin{align}
\lambda_1=\frac{2Q^{-1}(\zeta)}{\gamma\left(\sqrt{N}-Q^{-1}(\zeta)\right)}~~~~~~ \textrm{and}~~~~~~\lambda_2=\frac{-2Q^{-1}(\zeta)}{\gamma\left(\sqrt{N}+Q^{-1}(\zeta)\right)}.
\end{align}
\end{theorem}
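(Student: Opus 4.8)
The plan is to convert the outage event into a region in the plane of the two channel powers $u=|h_0|^2$ and $v=|h_1|^2$ and then integrate their joint density over it. Since $Q(\cdot)$ is strictly decreasing, $\tilde{P}_{b}^{\mathrm{CG-sub}}\ge\zeta$ is equivalent to $\frac{\sqrt{N}\,\Delta}{\Sigma+2/\gamma}\le Q^{-1}(\zeta)$, i.e.\ $\sqrt{N}\,|u-v|\le q(u+v)+2q/\gamma$ with $q:=Q^{-1}(\zeta)$ (the practically relevant range being $0<q<\sqrt{N}$). Splitting the absolute value into its two affine cases and rearranging, one finds that this is exactly $\beta_2 v+\lambda_2\le u\le\beta_1 v+\lambda_1$, where $\beta_1=\frac{\sqrt{N}+q}{\sqrt{N}-q}>1$, $\beta_2=1/\beta_1$, and the two intercepts simplify precisely to the $\lambda_1>0$ and $\lambda_2<0$ in the statement; one also checks $\beta_2 v+\lambda_2<\beta_1 v+\lambda_1$ for all $v\ge0$, so the indicator of the outage region equals $\mathbf{1}[u\le\beta_1 v+\lambda_1]-\mathbf{1}[u<\beta_2 v+\lambda_2]$. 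Hence $P_{\mathrm{out}}=\Pr\{u\le\beta_1 v+\lambda_1\}-\Pr\{u<\beta_2 v+\lambda_2\}$, and I would further split the first term at $u=\lambda_1$, since for $u\le\lambda_1$ the constraint $u\le\beta_1 v+\lambda_1$ is automatic: $P_{\mathrm{out}}=\Pr\{u\le\lambda_1\}+\Pr\{u>\lambda_1,\ v\ge\beta_2(u-\lambda_1)\}-\Pr\{u<\beta_2 v+\lambda_2\}$.

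The second ingredient is the joint law of $(u,v)$. Modelling $(|h_0|^2,|h_1|^2)$ as a correlated pair of exponentials — the correlation $\rho$ arising because $h_0=h_{sr}$ and $h_1=h_{sr}+\alpha h_{st}h_{tr}$ share the term $h_{sr}$ — gives the Kibble bivariate density $f_{u,v}(u,v)=\frac{1}{(1-\rho^2)\sigma_{h_0}^2\sigma_{h_1}^2}\exp\!\big[-\frac{1}{1-\rho^2}\big(\frac{u}{\sigma_{h_0}^2}+\frac{v}{\sigma_{h_1}^2}\big)\big]I_0\!\big(\frac{2\rho\sqrt{uv}}{(1-\rho^2)\sigma_{h_0}\sigma_{h_1}}\big)$. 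I would then expand $I_0(x)=\sum_{m\ge0}(x/2)^{2m}/(m!)^2$, so that $f_{u,v}=\sum_m c_m\,u^m v^m e^{-c_0 u-c_1 v}$ with $c_i=1/((1-\rho^2)\sigma_{h_i}^2)$, and interchange summation with each of the three integrals above (convergence of the resulting geometric-type series in $\rho^2$ must be noted).

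Each probability then becomes two nested one-dimensional integrals. For $\Pr\{u\le\lambda_1\}$ the $v$-integral over $[0,\infty)$ gives $m!/c_1^{m+1}$ and the $u$-integral over $[0,\lambda_1]$ gives $\gamma(m+1,c_0\lambda_1)/c_0^{m+1}$; collecting powers of $\rho^2$, $1-\rho^2$, $\sigma_{h_i}^2$ produces the first (single-sum) term of (\ref{eq:outage}) — a consistency check is that this series telescopes to $1-e^{-\lambda_1/\sigma_{h_0}^2}$, the marginal CDF of $|h_0|^2$. For $\Pr\{u>\lambda_1,\ v\ge\beta_2(u-\lambda_1)\}$ I would shift $t=u-\lambda_1$; the $v$-integral over $[\beta_2 t,\infty)$ yields $\Gamma(m+1,c_1\beta_2 t)$; expand it via $\Gamma(m+1,z)=m!e^{-z}\sum_{n=0}^m z^n/n!$ (creating the index $n$); substitute $s=t+\lambda_1$ and expand $(s-\lambda_1)^n$ binomially (creating $k$); the remaining $\int_{\lambda_1}^\infty s^{m+k}e^{-ps}\,ds=\Gamma(m+k+1,p\lambda_1)/p^{m+k+1}$ with $p=c_0+c_1\beta_2$, and the identities $\beta_2=-\lambda_2/\lambda_1$, $p\lambda_1=\frac{\lambda_1\sigma_{h_1}^2-\lambda_2\sigma_{h_0}^2}{(1-\rho^2)\sigma_{h_0}^2\sigma_{h_1}^2}$, deliver the second (triple-sum) term, the prefactor $\exp(-\lambda_2/((1-\rho^2)\sigma_{h_1}^2))$ coming from $e^{-c_0\lambda_1}\cdot e^{p\lambda_1}$. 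For $\Pr\{u<\beta_2 v+\lambda_2\}$ I would instead integrate $v$ out first over $[\beta_1 u+\lambda_1,\infty)$ to get $\Gamma(m+1,c_1(\beta_1 u+\lambda_1))$, expand it (index $n$) and expand $(\beta_1 u+\lambda_1)^n$ binomially (index $k$); the remaining $\int_0^\infty u^{m+k}e^{-(c_0+c_1\beta_1)u}\,du=(m+k)!/(c_0+c_1\beta_1)^{m+k+1}$ — a full factorial, since the lower limit is $0$ — yields the third (triple-sum) term. Adding the three gives (\ref{eq:outage}).

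The hard part will be the bookkeeping in this last step rather than any single integral: after the first integration one is left with incomplete gamma functions whose arguments are affine in the surviving variable, and pushing them through a second integration needs the finite-sum form of $\Gamma$, one or two binomial expansions, and careful index shifting, all while tracking the negative sign of $\lambda_2$ (valid for $\zeta<1/2$) and roughly a dozen powers of $\sigma_{h_0}^2,\sigma_{h_1}^2,\lambda_1,\lambda_2,1-\rho^2$ that must cancel into the compact form shown. The two geometric points that need care are (i) verifying that the outage region is exactly the stated wedge with intercepts collapsing to $\lambda_1$ and $\lambda_2$, and (ii) choosing the integration order in each piece — $u$ outer for the two parts of $\{u\le\beta_1 v+\lambda_1\}$ but $v$ outer for $\{u<\beta_2 v+\lambda_2\}$ — so that exactly one incomplete gamma survives where the final formula has one and none where it has a plain factorial; everything else is routine.
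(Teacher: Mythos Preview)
Your proposal is correct and follows essentially the same route as the paper: reduce the outage event to a wedge in the $(|h_0|^2,|h_1|^2)$-plane, plug in the correlated bivariate exponential density, expand $I_0$ as a power series, and evaluate each piece via the finite-sum representation of $\Gamma(m+1,\cdot)$ together with a binomial expansion. The only cosmetic difference is the slicing of the region --- the paper first partitions at $y_1=\lambda_1$ into $J_1+J_2$ and only recovers the plain factorial $(m+k)!$ in the third term at the very end by recombining a lower incomplete gamma from $J_1$ with an upper one from $J_{22}$, whereas your indicator decomposition $\mathbf{1}[u\le\beta_1 v+\lambda_1]-\mathbf{1}[u<\beta_2 v+\lambda_2]$ produces that factorial directly because the outer integral in your Term~3 runs over the whole half-line.
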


\begin{proof}
Substituting (\ref{eq:cgapb}) in (\ref{eq:poutdefine}), $P_{\mathrm{out}}$ is further given by
\begin{align}\label{eq:outage1}
P_{\mathrm{out}}
&=\Pr\left\{Q\left(\frac{\sqrt{N}\Delta}{\Sigma+2/\gamma}\right)\geq \zeta\right\}
=\Pr\left\{-\frac{Q^{-1}(\zeta)}{\sqrt{N}}\leq\frac{|h_0|^2-|h_1|^2}{|h_0|^2+|h_1|^2+2/\gamma}\leq \frac{Q^{-1}(\zeta)}{\sqrt{N}}\right\}
\nonumber\\
&=\Pr\left\{\left(1-\frac{Q^{-1}(\zeta)}{\sqrt{N}}\right)|h_0|^2-\left(1+\frac{Q^{-1}(\zeta)}{\sqrt{N}}\right)|h_1|^2\leq \frac{2Q^{-1}(\zeta)}{\gamma\sqrt{N}}\right.,
\nonumber\\
&~~~~~~~~~~~~~~~~~~~~~~~~\left.\left(1-\frac{Q^{-1}(\zeta)}{\sqrt{N}}\right)|h_1|^2-\left(1+\frac{Q^{-1}(\zeta)}{\sqrt{N}}\right)|h_0|^2\leq \frac{2Q^{-1}(\zeta)}{\gamma\sqrt{N}}\right\},
\end{align}
where $Q^{-1}(\cdot)$ denotes the inverse $Q$-function.
\begin{figure}
  \centering
  \includegraphics[width=80mm]{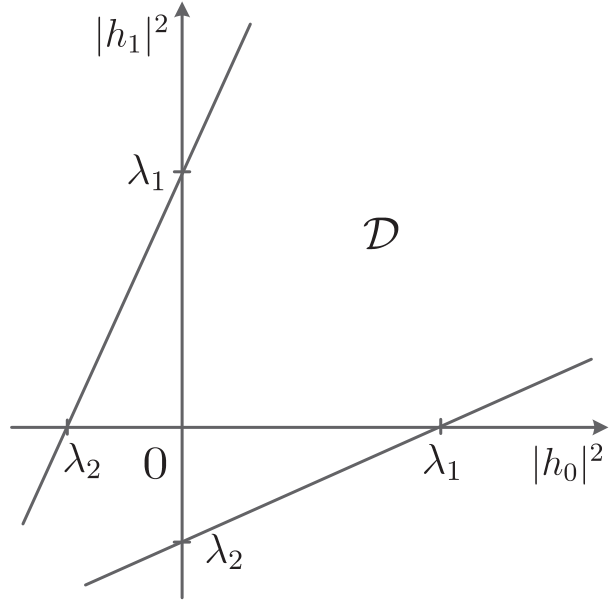}\\
  \caption{The domain of integration $\mathcal{D}$ for the calculation of the outage probability.}\label{fig:outageregion}
\end{figure}

Moreover, since $1-\frac{Q^{-1}(\zeta)}{\sqrt{N}}>0$, namely $\zeta>Q(\sqrt{N})$ generally holds for large $N$,  we have
\begin{align}\label{eq:pout}
P_{\mathrm{out}}
&=\iint_\mathcal{D} f_{|h_0|^2,|h_1|^2}(y_1,y_2)\mathrm{d}y_1\mathrm{d}y_2
\nonumber\\
&=\int_0^{\lambda_1}\int_0^{-\frac{\lambda_1 y_1}{\lambda_2}+\lambda_1}
f_{|h_0|^2,|h_1|^2}(y_1,y_2)\mathrm{d}y_2\mathrm{d}y_1
+\int_{\lambda_1}^\infty\int_{-\frac{\lambda_2 y_1}{\lambda_1}+\lambda_2}^{-\frac{\lambda_1 y_1}{\lambda_2}+\lambda_1}
f_{|h_0|^2,|h_1|^2}(y_1,y_2)\mathrm{d}y_2\mathrm{d}y_1
\nonumber\\
&\triangleq J_1(\zeta)+J_2(\zeta),
\end{align}
where the domain of integration $\mathcal{D}$ is displayed in Fig. \ref{fig:outageregion}, while $f_{|h_0|^2,|h_1|^2}(y_1,y_2)$ is the joint PDF of $|h_0|^2$ and $|h_1|^2$. The calculation of the integral $J_1(\zeta)$ and $J_2(\zeta)$ is given in Appendix \ref{sec:appendix0}.
\end{proof}

As channel affects BER performance, it is then of interest to check how the asymptotic  BER \eqref{eq:pbfloor} would satisfy a predefined performance under the random channel effect. We then define the asymptotic outage (AT) probability as the probability of the situation that the instantaneous BER floor falls below a certain threshold.

\begin{definition}
Define the asymptotic outage (AT) probability as
\begin{equation}\label{eq:patdefine}
P_{\mathrm{AT}}=\Pr\left\{P_{b}^{\mathrm{floor}}\geq\eta\right\}.
\end{equation}
\end{definition}

\begin{theorem}
The AT probability can be expressed as
\begin{align}\label{eq:atoutage}
\!\!P_{\mathrm{AT}}
=\!\sum_{m=0}^\infty\frac{C_{4m}x^{m+1}}{m+1}\!\left[{}_2F_1\!\left(\!2m\!+\!2,m\!+\!1;m\!+\!2,-\frac{\lambda_4}{\rho}\!\right)\!-
\!{}_2F_1\!\left(\!2m\!+\!2,m\!+\!1;m\!+\!2,-\frac{\lambda_3}{\rho}\!\right)\!\right],
\end{align}
where ${}_2F_1(\cdot,\cdot;\cdot,\cdot)$ denotes the Gauss hypergeometric function \cite{refs:toisap}, and
\begin{align}
\lambda_3=\frac{2}{1+\frac{Q^{-1}(\eta)}{\sqrt{N}}}-1,~~~~~~~~
\lambda_4=\frac{2}{1-\frac{Q^{-1}(\eta)}{\sqrt{N}}}-1.
\end{align}
\end{theorem}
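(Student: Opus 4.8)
The plan is to translate the event $\{P_b^{\mathrm{floor}}\ge\eta\}$ into an event on the ratio $|h_0|^2/|h_1|^2$ of the two composite channel powers and then integrate the joint density of $(|h_0|^2,|h_1|^2)$ over the resulting cone, mirroring the derivation of $P_{\mathrm{out}}$. Concretely, I would first use the strict monotonicity of the $Q$-function in (\ref{eq:pbfloor}): $P_b^{\mathrm{floor}}=Q(\sqrt{N}\Delta/\Sigma)\ge\eta$ is equivalent to $\Delta/\Sigma\le\theta$ with $\theta\triangleq Q^{-1}(\eta)/\sqrt{N}$, the non-degenerate regime being $Q(\sqrt{N})<\eta\le 1/2$, i.e. $0\le\theta<1$, since $P_b^{\mathrm{floor}}\le 1/2$ forces $P_{\mathrm{AT}}=0$ otherwise. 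Substituting $\Delta=||h_0|^2-|h_1|^2|$ and $\Sigma=|h_0|^2+|h_1|^2$, the inequality $||h_0|^2-|h_1|^2|\le\theta(|h_0|^2+|h_1|^2)$ is the conjunction of $(1-\theta)|h_0|^2\le(1+\theta)|h_1|^2$ and $(1-\theta)|h_1|^2\le(1+\theta)|h_0|^2$, i.e. $\lambda_3\le|h_0|^2/|h_1|^2\le\lambda_4$ with $\lambda_3=\frac{1-\theta}{1+\theta}=\frac{2}{1+Q^{-1}(\eta)/\sqrt{N}}-1$ and $\lambda_4=\frac{1+\theta}{1-\theta}=\frac{2}{1-Q^{-1}(\eta)/\sqrt{N}}-1$, which are exactly the stated $\lambda_3,\lambda_4$ (note also $\lambda_3\lambda_4=1$). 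Hence $P_{\mathrm{AT}}=F_R(\lambda_4)-F_R(\lambda_3)$, where $F_R$ is the CDF of $R=|h_0|^2/|h_1|^2$; since the term $2/\gamma$ is now absent, the integration region is a cone through the origin, simpler than the region $\mathcal{D}$ of the $P_{\mathrm{out}}$ proof.

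Next I would evaluate $F_R(r)=\int_0^\infty\left(\int_0^{r y_2}f_{|h_0|^2,|h_1|^2}(y_1,y_2)\,\mathrm{d}y_1\right)\mathrm{d}y_2$ using the correlated bivariate-exponential joint density of $(|h_0|^2,|h_1|^2)$ already employed for $P_{\mathrm{out}}$ (parameters $\rho$, $\sigma_{h_0}^2$, $\sigma_{h_1}^2$, carrying a modified-Bessel $I_0$ factor that stems from the jointly complex Gaussian pair $h_0=h_{sr}$, $h_1=h_{sr}+\alpha h_{st}h_{tr}$). Expanding $I_0(u)=\sum_m(u/2)^{2m}/(m!)^2$ turns the density into a series in $\rho^{2m}$ times a product of exponentials in $y_1$ and $y_2$; the inner $y_1$-integral becomes a lower incomplete gamma $\gamma(m+1,\cdot)$, and the remaining $y_2$-integral then has the form $\int_0^\infty y_2^{\,m}\mathrm{e}^{-\beta y_2}\gamma(m+1,cy_2)\,\mathrm{d}y_2$, which by the standard table integral (Gradshteyn--Ryzhik~6.455) equals a Gauss hypergeometric ${}_2F_1(1,2m+2;m+2;\cdot)$. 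A Pfaff transformation recasts it as ${}_2F_1(2m+2,m+1;m+2;-r/\rho)$ (the precise argument scaling following from the variance parameters), and collecting the remaining $\rho$- and variance-dependent prefactors into the constants $C_{4m}$ and $x$ --- whose explicit forms are given in the appendix --- yields $F_R(r)=\sum_m\frac{C_{4m}x^{m+1}}{m+1}\,{}_2F_1(2m+2,m+1;m+2;-r/\rho)$. Subtracting $F_R(\lambda_3)$ from $F_R(\lambda_4)$ then produces (\ref{eq:atoutage}).

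The reduction in the first step is routine; the main obstacle is the second step. It requires justifying the termwise integration (absolute convergence for $|\rho|<1$), selecting the correct form among the several equivalent GR integral formulas, and applying the appropriate Euler/Pfaff transformation so that the hypergeometric parameters come out as $(2m+2,m+1;m+2)$ and the argument as $-\lambda_i/\rho$, together with the clean identification of $C_{4m}$ and $x$. This computation runs parallel to --- and is no harder than --- the one behind $P_{\mathrm{out}}$, so it naturally belongs in the appendix.
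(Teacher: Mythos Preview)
Your reduction of $\{P_b^{\mathrm{floor}}\ge\eta\}$ to $\lambda_3\le |h_0|^2/|h_1|^2\le\lambda_4$ and hence to $P_{\mathrm{AT}}=F_R(\lambda_4)-F_R(\lambda_3)$ is exactly what the paper does, including the identification of $\lambda_3,\lambda_4$. The only divergence is in how the CDF of the ratio is obtained. You propose integrating the joint density directly over the cone $\{y_1\le r y_2\}$, producing an inner incomplete-gamma integral and then invoking Gradshteyn--Ryzhik~6.455 plus a Pfaff transformation to reach ${}_2F_1(2m+2,m+1;m+2;-r/\rho)$. The paper instead first computes the ratio \emph{density} via $f_X(x)=\int_0^\infty y\,f_{|h_0|^2,|h_1|^2}(xy,y)\,\mathrm{d}y$, which after the $I_0$ expansion collapses each term to the elementary rational form $C_{4m}\,x^m/(1+x/\rho)^{2m+2}$; the CDF then follows from the one-variable integral $\int_0^x y^m(1+y/\rho)^{-(2m+2)}\mathrm{d}y$, which is a direct instance of the Euler integral for ${}_2F_1$. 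Both routes are valid and end at the same series; the paper's is a bit shorter because it avoids the incomplete gamma and the Pfaff step, while yours stays closer in spirit to the $P_{\mathrm{out}}$ computation you are mirroring.
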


\begin{proof}
Substituting (\ref{eq:pbfloor}) in (\ref{eq:patdefine}), $P_{\mathrm{AT}}$ is further given by
\begin{align}\label{eq:atoutage1}
P_{\mathrm{AT}}
=\Pr\left\{\frac{\Delta}{\Sigma}\leq \frac{Q^{-1}(\eta)}{\sqrt{N}}\right\}
=\Pr\left\{\frac{\left|\frac{|h_0|^2}{|h_1|^2}-1\right|}{\frac{|h_0|^2}{|h_1|^2}+1}\leq \frac{Q^{-1}(\eta)}{\sqrt{N}}\right\}.
\end{align}

Define $X=\frac{|h_0|^2}{|h_1|^2}$ whose cumulative density function (CDF) can be computed as
\begin{align}\label{eq:hypergeometric}
&F_X(x)=\sum_{m=0}^\infty\frac{C_{4m}x^{m+1}}{m+1}{}_2F_1\left(2m+2,m+1;m+2,-\frac{x}{\rho}\right),
\end{align}
where the detailed derivation can be found in Appendix \ref{sec:appendix1}. Then we have
\begin{align}\label{eq:pat}
P_{\mathrm{AT}}=\Pr\left\{-\frac{Q^{-1}(\eta)}{\sqrt{N}}\leq \frac{X-1}{X+1}\leq \frac{Q^{-1}(\eta)}{\sqrt{N}}\right\}
=\Pr\left\{\lambda_3\leq X\leq \lambda_4\right\}
=F_X(\lambda_4)-F_X(\lambda_3).
\end{align}
Thus, the AT probability is obtained by substituting the CDF of $X$ into (\ref{eq:pat}).
\end{proof}

\subsection{Suboptimal Detector with the PSK Ambient Source}
In practice, ambient RF signals are usually the PSK or the Quadrature Amplitude Modulation (QAM) signals rather than the complex Gaussian signal. In this section,  we will study the suboptimal detector and its performance with PSK ambient signals\footnote{The extension to QAM ambient signal can be similarly made and is omitted due to the length limit.}, i.e.,
\begin{align}
s[n]=\sqrt{P_s}\exp\left(\mathrm{j}\frac{2\pi k}{M}\right),~~~~~~~~k=0,\cdots,M-1,
\end{align}
where  $P_s$ is the signal power.

Let us  explicitly expand $Z$ as
\begin{align}
Z =\left\{
   \begin{array}{ll}
   \sum\limits_{n=1}^{N}\left(|h_0|^2|s[n]|^2 +|w[n]|^2+ 2\Re\{h_0s[n]w^*[n] \}\right),&\mathcal{H}_0,\\[2 mm ]
   \sum\limits_{n=1}^{N}\left(|h_1|^2|s[n]|^2 +|w[n]|^2+ 2\Re\{h_1s[n]w^*[n] \}\right),&\mathcal{H}_1,
   \end{array}\right.
\end{align}

From the central limit theorem, we have $|w[n]|^2\sim\mathcal{N}(N_w,N_w^2)$ and $\Re\{h_is[n]w^*[n] \}\sim\mathcal{N}(0,|h_i|^2P_sN_w)$. Then the distribution of $Z$ under the hypothesis $\mathcal{H}_i$ can be obtained as $Z|\mathcal{H}_i\sim\mathcal{N}(\mu^{\mathrm{PSK}}_i,\varsigma^{\mathrm{PSK}}_i)$, with the PDF
\begin{align}
\hat{f}_{Z}(z|\mathcal{H}_i)=\frac{1}{\sqrt{2\pi\varsigma_i^{\mathrm{CG}}}}\exp\left[-\frac{\left(z-\mu^{\mathrm{CG}}_i\right)^2}{2\varsigma_i^{\mathrm{CG}}}\right],~~~~~~~~i=0,1,
\end{align}
where
\begin{align}\label{eq:hatmupsk}
\mu^{\mathrm{PSK}}_i=N\sigma_i^2,~~~~~~\varsigma^{\mathrm{PSK}}_i=2N|h_i|^2P_s N_w+NN_w^2,~~~~~~i=0,1.
\end{align}

\begin{theorem}
The threshold for the suboptimal detector with PSK ambient signals is expressed as
\begin{align}\label{eq:thmpsk}
T_h^{\mathrm{PSK}}&=\frac{NN_w}{2}+NN_w\sqrt{\left(|h_0|^2\gamma+\frac{1}{2}\right)\left(|h_1|^2\gamma+\frac{1}{2}\right)
\left[1+\frac{2\ln\left(\frac{2|h_0|^2\gamma+1}{2|h_1|^2\gamma+1}\right)}{N\gamma(|h_0|^2-|h_1|^2)}\right]}.
\end{align}
\end{theorem}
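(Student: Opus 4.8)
The plan is to mirror, step for step, the derivation of the complex Gaussian suboptimal threshold \eqref{eq:suboptimalth}, now with the PSK moments $\mu_i^{\mathrm{PSK}}$ and $\varsigma_i^{\mathrm{PSK}}$ of \eqref{eq:hatmupsk} in place of $\mu_i^{\mathrm{CG}}$ and $\varsigma_i^{\mathrm{CG}}$. First I would impose the equal-likelihood condition $\hat{f}_Z(T_h^{\mathrm{PSK}}|\mathcal{H}_0)=\hat{f}_Z(T_h^{\mathrm{PSK}}|\mathcal{H}_1)$, take natural logarithms of both sides, and group powers of $T_h^{\mathrm{PSK}}$ to obtain a quadratic of exactly the form \eqref{eq:polynomial}, namely $c_1'(T_h^{\mathrm{PSK}})^2+c_2'T_h^{\mathrm{PSK}}+c_3'=0$ with $c_1'=\varsigma_1^{\mathrm{PSK}}-\varsigma_0^{\mathrm{PSK}}$, $c_2'=2(\mu_1^{\mathrm{PSK}}\varsigma_0^{\mathrm{PSK}}-\mu_0^{\mathrm{PSK}}\varsigma_1^{\mathrm{PSK}})$, and $c_3'=(\mu_0^{\mathrm{PSK}})^2\varsigma_1^{\mathrm{PSK}}-(\mu_1^{\mathrm{PSK}})^2\varsigma_0^{\mathrm{PSK}}-\varsigma_0^{\mathrm{PSK}}\varsigma_1^{\mathrm{PSK}}\ln(\varsigma_1^{\mathrm{PSK}}/\varsigma_0^{\mathrm{PSK}})$; the threshold is then its positive root.

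The real content is the simplification. I would set $a_i\triangleq 2|h_i|^2\gamma+1$ and use $\sigma_i^2=N_w(|h_i|^2\gamma+1)$ together with $P_s=\gamma N_w$, so that \eqref{eq:hatmupsk} reads $\varsigma_i^{\mathrm{PSK}}=NN_w^2 a_i$ and $\mu_i^{\mathrm{PSK}}=\tfrac12 NN_w(a_i+1)$. Substituting, $c_1'=NN_w^2(a_1-a_0)$ and $c_2'=N^2N_w^3(a_0-a_1)$, so the $-c_2'/(2c_1')$ part of the root is immediately $NN_w/2$, which is the first summand of \eqref{eq:thmpsk}. For the constant term, the nonobvious algebraic step is the identity $(a_0+1)^2 a_1-(a_1+1)^2 a_0=(a_0-a_1)(a_0a_1-1)$, which collapses $c_3'$ to $\tfrac14 N^3N_w^4(a_0-a_1)(a_0a_1-1)-N^2N_w^4 a_0a_1\ln(a_1/a_0)$.

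Next I would form the discriminant: the $(a_0-a_1)^2$ piece coming from $(c_2')^2$ and the $(a_0-a_1)^2(a_0a_1-1)$ piece coming from $-4c_1'c_3'$ add up to $N^4N_w^6 a_0a_1(a_0-a_1)^2$, while the logarithmic piece contributes $4N^3N_w^6 a_0a_1(a_1-a_0)\ln(a_1/a_0)$, so that
\begin{align}
(c_2')^2-4c_1'c_3'=N^4N_w^6\, a_0a_1\,(a_0-a_1)^2\left[1-\frac{4\ln(a_1/a_0)}{N(a_0-a_1)}\right].
\end{align}
Taking the square root, dividing by $2c_1'$, and using $a_0-a_1=2\gamma(|h_0|^2-|h_1|^2)$ to rewrite $\tfrac{4\ln(a_1/a_0)}{N(a_0-a_1)}$ as $-\tfrac{2\ln(a_0/a_1)}{N\gamma(|h_0|^2-|h_1|^2)}$, together with $\tfrac12\sqrt{a_0a_1}=\sqrt{(|h_0|^2\gamma+\tfrac12)(|h_1|^2\gamma+\tfrac12)}$, produces the second summand and hence \eqref{eq:thmpsk}.

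I expect the main obstacle to be sign bookkeeping rather than the algebra: the factor $|a_0-a_1|/(a_1-a_0)$ equals $\pm1$ according to whether $\sigma_0^2<\sigma_1^2$ or $\sigma_0^2>\sigma_1^2$, and one must verify that in each regime the physically admissible (positive) root of the quadratic is exactly the expression written in \eqref{eq:thmpsk}; one also needs the bracket $1+\tfrac{2\ln((2|h_0|^2\gamma+1)/(2|h_1|^2\gamma+1))}{N\gamma(|h_0|^2-|h_1|^2)}$ to be nonnegative so that the square root is real, which holds for the same reason as in the complex Gaussian case, since $\ln x$ and $x-1$ always share the same sign. The identity $(a_0+1)^2a_1-(a_1+1)^2a_0=(a_0-a_1)(a_0a_1-1)$ is the single step that makes everything telescope into the stated closed form; the remainder is the routine substitution already rehearsed in the proof of \eqref{eq:suboptimalth}.
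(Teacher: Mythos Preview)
Your proposal is correct and follows exactly the paper's approach: the paper's proof merely states that one equates $\hat f_Z(T_h^{\mathrm{PSK}}\mid\mathcal{H}_0)=\hat f_Z(T_h^{\mathrm{PSK}}\mid\mathcal{H}_1)$ as in \eqref{eq:mldetector} and that ``tedious yet straightforward calculation'' yields \eqref{eq:thmpsk}, and you have supplied precisely that calculation, including the key factorization $(a_0+1)^2a_1-(a_1+1)^2a_0=(a_0-a_1)(a_0a_1-1)$ and the sign discussion that the paper omits.
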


\begin{proof}
Similar to the operation (\ref{eq:mldetector}), the optimum threshold for locating the range of the energy $Z$ is obtained through $\hat{f}_{Z}\left(T_h^{\mathrm{PSK}}|\mathcal{H}_0\right)=\hat{f}_{Z}\left(T_h^{\mathrm{PSK}}|\mathcal{H}_1\right)$. After some tedious yet straightforward  calculation, we will obtain the result in (\ref{eq:thmpsk}).
\end{proof}

\begin{theorem}
The BER for the suboptimal detector can be expressed as
\begin{align}\label{eq:pskpb}
P_{b}^{\mathrm{PSK}}=\frac{1}{2}-\frac{1}{2}Q\left(\frac{T_{h}^{\mathrm{PSK}}-\mu_{\max}}{\sqrt{\varsigma_{\max}}}\right)+
\frac{1}{2}Q\left(\frac{T_{h}^{\mathrm{PSK}}-\mu_{\min}}{\sqrt{\varsigma_{\min}}}\right),
\end{align}
where $\mu_{\max}=\max\left\{\mu^{\mathrm{PSK}}_0,\mu^{\mathrm{PSK}}_1\right\}$,
$\mu_{\min}=\min\left\{\mu^{\mathrm{PSK}}_0,\mu^{\mathrm{PSK}}_1\right\}$,
$\varsigma_{\max}=\max\left\{\varsigma^{\mathrm{PSK}}_0,\varsigma^{\mathrm{PSK}}_1\right\}$ and $\varsigma_{\min}=\min\left\{\varsigma^{\mathrm{PSK}}_0,\varsigma^{\mathrm{PSK}}_1\right\}$.
\end{theorem}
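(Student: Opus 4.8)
The plan is to mirror, essentially verbatim, the argument used for Theorem~\ref{theorem:cgsubber}, replacing the complex Gaussian moments by the PSK ones. Under the central limit theorem the statistic $Z=\|\boldsymbol{y}\|^2$ is again modelled as Gaussian, $Z\,|\,\mathcal{H}_i\sim\mathcal{N}(\mu_i^{\mathrm{PSK}},\varsigma_i^{\mathrm{PSK}})$ with $\mu_i^{\mathrm{PSK}}$ and $\varsigma_i^{\mathrm{PSK}}$ given in (\ref{eq:hatmupsk}), and the detector is the energy test (\ref{eq:hatthcond}) with the threshold $T_h^{\mathrm{PSK}}$ from (\ref{eq:thmpsk}). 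Since whether the likelihood ratio (equivalently the energy test) is increasing or decreasing in $Z$ depends on the sign of $\sigma_1^2-\sigma_0^2$, I would split the derivation into the two cases $\sigma_0^2>\sigma_1^2$ and $\sigma_0^2<\sigma_1^2$, exactly as in (\ref{eq:gaussianpb1})--(\ref{eq:gaussianpb2}).

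In the case $\sigma_0^2>\sigma_1^2$ the detector declares $\hat d=0$ when $Z>T_h^{\mathrm{PSK}}$, so, using equiprobable symbols,
\begin{align*}
P_b^{\mathrm{PSK}}
&=\tfrac12\Pr\!\left(Z<T_h^{\mathrm{PSK}}\,\big|\,\mathcal{H}_0\right)+\tfrac12\Pr\!\left(Z>T_h^{\mathrm{PSK}}\,\big|\,\mathcal{H}_1\right)\\
&=\tfrac12\int_{-\infty}^{T_h^{\mathrm{PSK}}}\hat f_Z(z|\mathcal{H}_0)\,\mathrm{d}z+\tfrac12\int_{T_h^{\mathrm{PSK}}}^{\infty}\hat f_Z(z|\mathcal{H}_1)\,\mathrm{d}z\\
&=\tfrac12-\tfrac12 Q\!\left(\frac{T_h^{\mathrm{PSK}}-\mu_0^{\mathrm{PSK}}}{\sqrt{\varsigma_0^{\mathrm{PSK}}}}\right)+\tfrac12 Q\!\left(\frac{T_h^{\mathrm{PSK}}-\mu_1^{\mathrm{PSK}}}{\sqrt{\varsigma_1^{\mathrm{PSK}}}}\right),
\end{align*}
which just evaluates the two Gaussian tails, in the same way as (\ref{eq:gaussianpb1}). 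For $\sigma_0^2<\sigma_1^2$ the inequality in the decision rule flips, the roles of $\mathcal{H}_0$ and $\mathcal{H}_1$ in the two integrals are interchanged, and one obtains $\tfrac12+\tfrac12 Q\!\big((T_h^{\mathrm{PSK}}-\mu_0^{\mathrm{PSK}})/\sqrt{\varsigma_0^{\mathrm{PSK}}}\big)-\tfrac12 Q\!\big((T_h^{\mathrm{PSK}}-\mu_1^{\mathrm{PSK}})/\sqrt{\varsigma_1^{\mathrm{PSK}}}\big)$, the analogue of (\ref{eq:gaussianpb2}). To fuse the two cases into the single statement (\ref{eq:pskpb}), I would note that $\mu_i^{\mathrm{PSK}}=N\sigma_i^2$, $\varsigma_i^{\mathrm{PSK}}=2N|h_i|^2P_sN_w+NN_w^2$ and $\sigma_i^2=|h_i|^2P_s+N_w$ are all strictly increasing in $|h_i|^2$, so the events $\sigma_0^2>\sigma_1^2$, $\mu_0^{\mathrm{PSK}}>\mu_1^{\mathrm{PSK}}$ and $\varsigma_0^{\mathrm{PSK}}>\varsigma_1^{\mathrm{PSK}}$ coincide; substituting $(\mu_{\max},\varsigma_{\max})$ for the $\mathcal{H}_0$-moments in the first case and for the $\mathcal{H}_1$-moments in the second then reduces both expressions to (\ref{eq:pskpb}).

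I do not expect a genuine obstacle: once the PSK variances are in place this is the same bookkeeping as the complex Gaussian case. The only points that need explicit care are (i) the monotonicity check above, which is what licenses writing the answer in the unified $\max/\min$ form regardless of which hypothesis carries the larger energy, and (ii) flagging that, exactly as in Theorem~\ref{theorem:cgsubber}, the Gaussian model for $Z$ (in particular replacing the true lower integration limit $0$ by $-\infty$) makes (\ref{eq:pskpb}) an approximation that is accurate in the large-$N$ regime rather than an exact identity.
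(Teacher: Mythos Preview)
Your proposal is correct and follows exactly the approach the paper intends: the paper's own proof simply states that it is ``similar to that of Theorem~\ref{theorem:cgsubber},'' and you have carried out precisely that parallel computation, including the two-case split and the Gaussian tail evaluations. Your added monotonicity remark justifying the $\max/\min$ unification is a welcome clarification that the paper leaves implicit.
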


\begin{proof}
The proof is similar to that of Theorem \ref{theorem:cgsubber}.
\end{proof}

We can see that the threshold (\ref{eq:thmpsk}) cannot be obtained without the knowledge of CSI. However, if the reader have access to the knowledge of the noise, i.e., $N_w$, we can obtain the threshold with $\sigma_i^2$ as follows
\begin{align}\label{eq:pskth}
T_h^{\mathrm{PSK}}&=\frac{NN_w}{2}+\frac{N}{2}\sqrt{\left(2\sigma_0^2-N_w\right)\left(2\sigma_1^2-N_w\right)
\left[1+\frac{2N_w\ln\left(\frac{2\sigma_0^2-N_w}{2\sigma_1^2-N_w}\right)}{N(\sigma_0^2-\sigma_1^2)}\right]}.
\end{align}

Nevertheless, we provide another solution even when $N_w$ is unknown.
\begin{corollary}
For high SNR  circumstance with $2|h_i|^2P_s+N_w\gg N_w$ and with large $N$, the asymptotic threshold is expressed as
\begin{align}\label{eq:psktha}
\tilde{T}_h^{\mathrm{PSK}}=  N\sigma_0\sigma_1.
\end{align}
\end{corollary}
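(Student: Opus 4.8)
The plan is to start from the exact suboptimal threshold for PSK ambient signals given in~(\ref{eq:thmpsk}) and push it to the high-SNR, large-$N$ regime. First I would rewrite~(\ref{eq:thmpsk}) in terms of $\sigma_i^2$ using the definitions $\sigma_i^2 = |h_i|^2 P_s + N_w$ of~(\ref{eq:sigma2}); equivalently, since $|h_i|^2\gamma + \tfrac12 = (2|h_i|^2 P_s + N_w)/(2N_w) = (2\sigma_i^2 - N_w)/(2N_w)$, the threshold becomes
\begin{align}
T_h^{\mathrm{PSK}}=\frac{NN_w}{2}+\frac{N}{2}\sqrt{\left(2\sigma_0^2-N_w\right)\left(2\sigma_1^2-N_w\right)\left[1+\frac{2N_w\ln\!\left(\frac{2\sigma_0^2-N_w}{2\sigma_1^2-N_w}\right)}{N(\sigma_0^2-\sigma_1^2)}\right]},
\end{align}
which is precisely~(\ref{eq:pskth}). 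This intermediate form is the natural launching point.

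Next I would invoke the two limiting assumptions separately. The high-SNR condition $2|h_i|^2 P_s + N_w \gg N_w$, i.e.\ $2\sigma_i^2 - N_w \approx 2\sigma_i^2$, lets me drop the additive $N_w/2$ and replace each factor $2\sigma_i^2 - N_w$ by $2\sigma_i^2$, so the leading behaviour of the product under the square root is $\tfrac{N}{2}\sqrt{(2\sigma_0^2)(2\sigma_1^2)} = N\sigma_0\sigma_1$ times the bracketed correction factor. For the bracket, I would argue that the logarithmic term is $O(N_w/N)$: the numerator $2N_w\ln(\cdot)$ is bounded (the ratio inside the log tends to a finite constant $|h_0|^2/|h_1|^2$ as $N_w\to 0$), while the denominator $N(\sigma_0^2-\sigma_1^2)$ grows with $N$, so the whole correction $\to 0$ for large $N$, and $\sqrt{1 + o(1)} \to 1$. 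Combining, $T_h^{\mathrm{PSK}} \to N\sigma_0\sigma_1$, which is~(\ref{eq:psktha}).

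The main obstacle — really the only subtlety — is making the two approximations compatible and ordering the limits correctly: the $N_w/N$ term in the bracket is simultaneously ``small because $N$ is large'' and ``small because $N_w$ is small relative to the signal,'' and one should check that neither $\sigma_0^2-\sigma_1^2$ in the denominator nor the log's argument degenerates (handled already by the Remark that $\sigma_0^2\neq\sigma_1^2$ almost surely, so that the difference channel gain $|h_0|^2\neq|h_1|^2$ stays bounded away from zero). I would also remark that the appeal of~(\ref{eq:psktha}) is that it no longer contains $N_w$ at all, only $\sigma_0,\sigma_1$, so it can be computed purely from the estimated received-signal statistics of Section~\ref{sec:estimation} without knowing the noise power; this is the practical payoff and worth stating explicitly. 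No heavy computation is needed — the result is a direct asymptotic reading of~(\ref{eq:pskth}) — so the write-up can be kept to a few lines mirroring the ``after some straightforward calculation'' style used for the earlier thresholds.
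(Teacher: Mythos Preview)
Your proposal is correct and follows essentially the same route as the paper: both arrive at the intermediate form $N\sigma_0\sigma_1\sqrt{1+\tfrac{2N_w\ln(\sigma_0^2/\sigma_1^2)}{N(\sigma_0^2-\sigma_1^2)}}$ after the high-SNR simplification and then invoke large $N$ to kill the bracket. The only cosmetic difference is that the paper first approximates the variance $\varsigma_i^{\mathrm{PSK}}\approx 2NN_w\sigma_i^2$ and re-solves the threshold equation, whereas you apply the high-SNR replacement $2\sigma_i^2-N_w\approx 2\sigma_i^2$ directly inside~(\ref{eq:pskth}); the two are algebraically equivalent, and your remark that the resulting~(\ref{eq:psktha}) is free of $N_w$ matches the paper's stated motivation.
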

\begin{proof}
When there is $2|h_i|^2P_s+N_w\gg N_w$, the asymptotic distribution of $Z$ with PSK ambient signals under the hypothesis $\mathcal{H}_i$ can be approximated by
\begin{align}\label{eq:pskzdistri}
Z|\mathcal{H}_i \sim\mathcal{N}\left(N|h_i|^2P_s+NN_w,2N|h_i|^2P_s N_w+2NN_w^2\right)=\mathcal{N}\left(N\sigma_i^2,2NN_w\sigma_i^2\right).
\end{align}
Similar to the operation before, the corresponding threshold is  given by
\begin{align}\label{eq:pskth2}
\tilde{T}_h^{\mathrm{PSK}}= N\sigma_0\sigma_1\sqrt{1+\frac{2N_w\ln\left(\frac{\sigma_0^2}{\sigma_1^2}\right)}{N(\sigma_0^2-\sigma_1^2)}}\approx N\sigma_0\sigma_1,
\end{align}
where the approximation holds valid for $N$ large enough. Then the threshold $\tilde{T}_h^{\mathrm{PSK}}$ can be obtained just with knowledge of  $\sigma_i^2$.
\end{proof}

\begin{rem}
The proposed suboptimal detector with PSK ambient signals achieves the balanced BER for $d=0$ and $d=1$ at the threshold (\ref{eq:pskth2}), i.e.,
\begin{align}
&\Pr(\hat{d}=0|\mathcal{H}_1)-\Pr(\hat{d}=1|\mathcal{H}_0)
=\frac{1}{2}-\frac{1}{2}Q\left(\frac{\tilde{T}_{h}^{\mathrm{PSK}}-N\sigma^2_0}{\sqrt{2NN_w\sigma_0^2}}\right)-\frac{1}{2}Q\left(\frac{\tilde{T}_{h}^{\mathrm{PSK}}-N\sigma^2_1}{\sqrt{2NN_w\sigma_1^2}}\right)\nonumber\\
&=\frac{1}{2}\left[1-Q\left(\frac{\sqrt{N}(\sigma_1-\sigma_0)}{\sqrt{2N_w}}\right)-Q\left(\frac{\sqrt{N}(\sigma_0-\sigma_1)}{\sqrt{2N_w}}\right)\right]
=0.
\end{align}
\end{rem}

\begin{corollary}
For high SNR  circumstance with $2|h_i|^2P_s+N_w\gg N_w$ and large $N$, the asymptotic BER is given by
\begin{align}\label{eq:pskpba}
\tilde{P}_b^{\mathrm{PSK}}=
Q\left(\sqrt{\frac{N}{2}}\left|\sqrt{|h_0|^2\gamma+1}-\sqrt{|h_1|^2\gamma+1}\right|\right)
\approx Q\left(\sqrt{\frac{N\gamma}{2}}\big||h_0|-|h_1|\big|\right).
\end{align}
\end{corollary}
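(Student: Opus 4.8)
The plan is to substitute the asymptotic threshold $\tilde{T}_h^{\mathrm{PSK}}=N\sigma_0\sigma_1$ from \eqref{eq:psktha} together with the asymptotic moments of $Z$ in \eqref{eq:pskzdistri} into the BER formula \eqref{eq:pskpb}. First I would invoke the balanced-BER property established in the remark just above: at the threshold $\tilde{T}_h^{\mathrm{PSK}}$ one has $\Pr(\hat{d}=1|\mathcal{H}_0)=\Pr(\hat{d}=0|\mathcal{H}_1)$, so it suffices to evaluate a single conditional error probability. Assume without loss of generality that $\sigma_0>\sigma_1$ (equivalently $|h_0|>|h_1|$); then, by the decision rule, the error event under $\mathcal{H}_1$ is $\{Z>\tilde{T}_h^{\mathrm{PSK}}\}$, and using $Z|\mathcal{H}_1\sim\mathcal{N}(N\sigma_1^2,\,2NN_w\sigma_1^2)$ from \eqref{eq:pskzdistri} gives
\[
\tilde{P}_b^{\mathrm{PSK}}=Q\!\left(\frac{N\sigma_0\sigma_1-N\sigma_1^2}{\sqrt{2NN_w\sigma_1^2}}\right)=Q\!\left(\frac{\sqrt{N}\,(\sigma_0-\sigma_1)}{\sqrt{2N_w}}\right).
\]

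Next I would re-express $\sigma_i$ through the source SNR. Since $\sigma_i^2=|h_i|^2P_s+N_w=N_w(|h_i|^2\gamma+1)$ by \eqref{eq:sigma2}, we have $\sigma_i=\sqrt{N_w}\,\sqrt{|h_i|^2\gamma+1}$, whence
\[
\frac{\sigma_0-\sigma_1}{\sqrt{2N_w}}=\frac{1}{\sqrt{2}}\left(\sqrt{|h_0|^2\gamma+1}-\sqrt{|h_1|^2\gamma+1}\right),
\]
and restoring absolute values to cover also the case $\sigma_1>\sigma_0$ produces the exact asymptotic expression $\tilde{P}_b^{\mathrm{PSK}}=Q\big(\sqrt{N/2}\,\big|\sqrt{|h_0|^2\gamma+1}-\sqrt{|h_1|^2\gamma+1}\big|\big)$. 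The approximate equality then follows from the high-SNR hypothesis $2|h_i|^2P_s+N_w\gg N_w$, under which $|h_i|^2\gamma+1\approx|h_i|^2\gamma$, so $\sqrt{|h_i|^2\gamma+1}\approx|h_i|\sqrt{\gamma}$ and the $Q$-function argument collapses to $\sqrt{N\gamma/2}\,\big||h_0|-|h_1|\big|$, as claimed.

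I expect the only delicate aspect to be bookkeeping rather than a genuine obstacle. One must confirm that the two $Q$-terms in \eqref{eq:pskpb} actually coincide at $\tilde{T}_h^{\mathrm{PSK}}$ — this is precisely the balanced-BER identity noted above and relies on $\sqrt{\varsigma^{\mathrm{PSK}}_i}\approx\sqrt{2NN_w}\,\sigma_i$ in the same regime, i.e., dropping $NN_w^2$ against $2N|h_i|^2P_sN_w$ in \eqref{eq:hatmupsk}. One must also check that the Gaussian approximation of $Z$ via the central limit theorem and the two successive high-SNR simplifications (first the one just mentioned leading to \eqref{eq:pskzdistri}, then dropping the $1$ against $|h_i|^2\gamma$) are applied consistently. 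No machinery beyond the central limit theorem, the threshold \eqref{eq:psktha}, and the BER formula \eqref{eq:pskpb} is needed.
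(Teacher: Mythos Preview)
Your proposal is correct and follows essentially the same route as the paper: both substitute the asymptotic threshold $\tilde{T}_h^{\mathrm{PSK}}=N\sigma_0\sigma_1$ from \eqref{eq:psktha} and the approximated variance $2NN_w\sigma_i^2$ from \eqref{eq:pskzdistri} into the BER expression \eqref{eq:pskpb}, arriving at $Q\big(\sqrt{N}\,|\sigma_0-\sigma_1|/\sqrt{2N_w}\big)$, and then re-express via $\sigma_i^2=N_w(|h_i|^2\gamma+1)$ and apply the high-SNR simplification. The only cosmetic difference is that you invoke the balanced-BER remark to collapse \eqref{eq:pskpb} to a single $Q$-term up front, whereas the paper keeps both $Q$-terms and lets them combine after substitution; the computations are otherwise identical.
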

\begin{proof}
The result is easily obtained by recomputing (\ref{eq:pskpb}), i.e., replacing $T_h^{\mathrm{PSK}}$ with $\tilde{T}_h^{\mathrm{PSK}}$,
and replacing $\sigma_i^{\rm{PSK}}$ with $2NN_w\sigma_i^2$,
\begin{align}\label{eq:pskpba2}
\!\!\!\!\tilde{P}_b^{\mathrm{PSK}}\!
=\frac{1}{2}-\frac{1}{2}Q\!\left(\frac{\tilde{T}_h^{\mathrm{PSK}}-N\sigma_{\max}^2}{\sqrt{2NN_w\sigma_{\max}^2}}\right)
+\frac{1}{2}Q\!\left(\frac{\tilde{T}_h^{\mathrm{PSK}}-N\sigma_{\min}^2}{\sqrt{2NN_w\sigma_{\min}^2}}\right)
=Q\!\left(\frac{\sqrt{N}|\sigma_0-\sigma_1|}{\sqrt{2N_w}}\right).
\end{align}
\end{proof}

Unlike the case of complex Gaussian ambient signals, the BER (\ref{eq:pskpba}) with PSK ambient signals is not only an decreasing function of SNR  but also meets no error floor as SNR goes to infinity. It is also noted that the channel difference $\big||h_0|-|h_1|\big|$ rather than RCD  affects the performance here. Moreover, increasing the sampling number $N$ has the same effect as increasing SNR.

\section{Parameter Estimation}
\label{sec:estimation}
For the proposed detectors (\ref{eq:optimalth}) and (\ref{eq:suboptimalth}),  the reader does not need to estimate the channel state information of $h_{st}$, $h_{sr}$, and $h_{tr}$, as well as  $s[n]$ and $\alpha$. Nevertheless, the two crucial parameters $\sigma_0^2$ and $\sigma_1^2$ should be estimated before the detection.

\subsection{Blind Estimation of $\sigma_0^2$ and $\sigma_1^2$}
Since the channel energy (or equivalently the channel amplitude) varies much slower than the instantaneous CSI, we assume that the coherent time of  channel energy spans much longer than the channel coherent time. Specifically, let us assume the channel energy does not change during $M$ symbol periods of the tag, (or $MN$ $s[n]$'s correspondingly), and the corresponding received signal vectors at the reader are denoted as $\boldsymbol{y}_k~(k=1,\cdots,M)$. Bearing in mind that $\sigma_0^2$ and $\sigma_1^2$ represent the statistic variances of the received signal in (\ref{eq:y}), we then propose the following estimation steps:
\begin{enumerate}[Step 1:]
\item Compute the normalized energy of $\boldsymbol{y}_k$ as
\begin{align}
A_k=\frac{\|\boldsymbol{y}_k\|^2}{N},~~~~~~k=1,\cdots,M.
\end{align}
\item Arrange $A_k$ in ascending order, denoted as $A_k^{\uparrow},~k=1,\cdots,M$.
\item Since the tag transmits symbols of 0 and 1 with equal probability, average the first and second half of $A_k^{\uparrow}$ as
\begin{align}\label{eq:A}
\!\!A_{\min}=\frac{2}{M}\sum_{k=1}^{M/2}A_k^{\uparrow},~~~~~~~~A_{\max}=\frac{2}{M}\!\sum_{k=M/2+1}^{M}\!A_k^{\uparrow}.
\end{align}
\end{enumerate}

However, (\ref{eq:A}) can not tell which one of $A_{\min}$ and $A_{\max}$ corresponds to which $\sigma_i^2$.

\subsection{Discrimination of $\sigma_0^2$ and $\sigma_1^2$ with Short Training}
\begin{figure}[t]
\centering
\includegraphics[width=110mm]{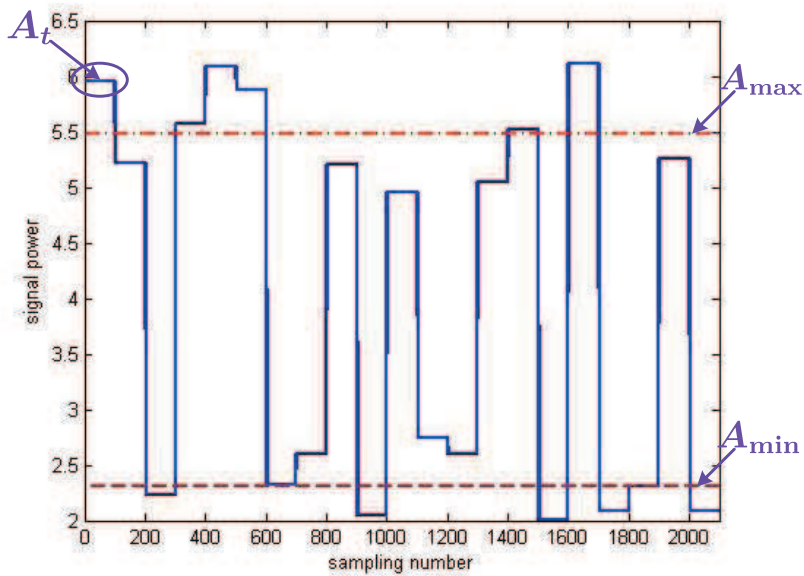}
\caption{An example demonstrating the estimation of $\sigma_0^2$ and $\sigma_1^2$, with $N=100$, $M=20$, and $M_t=1$.}
\label{fig:parameter estimation}
\end{figure}

We employ a very short training to discriminate $\sigma_0^2$ from $\sigma_1^2$. Assume the tag sends $M_t\geq 1$  bits as training symbols and the corresponding received signal vectors are $\boldsymbol{y}_{ti}~(i=1,\cdots,M_t)$. Then we continue the previous estimation approach as
\begin{enumerate}[Step 4:]
\item Compute the average of $M_t$ normalized powers as
\begin{align}
A_t=\frac{1}{M_t}\sum_{i=1}^{M_t}\frac{\|\boldsymbol{y}_{ti}\|^2}{N}.
\end{align}
\end{enumerate}
\begin{enumerate}[Step 5:]
\item If $|A_{\min}-A_t|<|A_{\max}-A_t|$, set $\hat{\sigma}_0^2=A_{\max}$ and $\hat{\sigma}_1^2=A_{\min}$; otherwise set $\hat{\sigma}_0^2=A_{\min}$ and $\hat{\sigma}_1^2=A_{\max}$.
\end{enumerate}

A specific example is presented here with $N=100$, $M=20$, and $M_t=1$. We show $A_t$ and $A_k~(k=1,\cdots,20)$ in Fig. \ref{fig:parameter estimation} and obtain $A_{\min}$ and $A_{\max}$ as the corresponding values of the two dotted lines. Since $|A_{\min}-A_t|>|A_{\max}-A_t|$, we set $\hat{\sigma}_0^2=A_{\min}$ and $\hat{\sigma}_1^2=A_{\max}$.

\begin{rem}
Theoretically, sending one training symbol is sufficient to distinguish $\sigma_0^2$ and $\sigma_1^2$. Moreover, we call this estimation a ``semi-blind'' method, where the energies of $M$ symbols are utilized to blindly estimate values of $\sigma_i^2$ while only few training symbols are required to differentiate between the two $\sigma_i^2$'s.
\end{rem}

\section{Numerical Results }
\label{sec:numerical results}
In this section, we resort to numerical examples to evaluate the proposed studies. Since the distance between the source and the tag (or the distance between the source and the reader) is much larger than that between the tag and the reader \cite{refs:abwcoota}, we generate the channels $h_{st}$ and $h_{sr}$ according to $\mathcal{CN}(0,1)$ and the channel $h_{tr}$ according to $\mathcal{CN}(0,10)$. Energies of all channels are assumed to hold unchanged during 50 symbol period  of the tag, i.e., $M=50$, and 4 training symbols of bit "1" are periodically inserted, i.e., $M_t=4$. The tag coefficient $\alpha=0.5$ and the AGWN follows $\mathcal{CN}(0,1)$. Totally $10^6$ Monte-Carlo runs are adopted for average.

\begin{figure}[t]
\centering
\includegraphics[width=110mm]{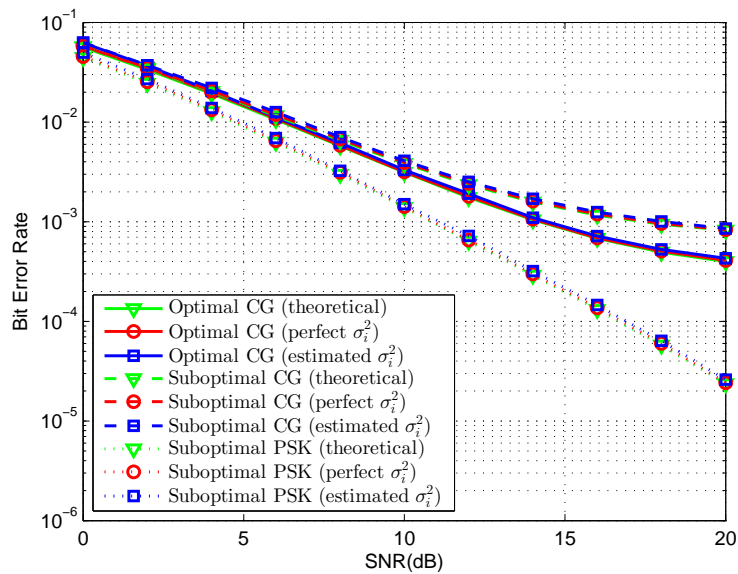}
\caption{BER versus SNR for the detectors with $N=40$ and RCD = 0.5. }
\label{fig:BERvsSNR}
\end{figure}
We first demonstrate the BER versus SNR of the proposed detectors in Fig. \ref{fig:BERvsSNR}. The simulated BERs with perfect $\sigma_i^2$ and estimated $\sigma_i^2$ are displayed, respectively, and thresholds of different detectors in (\ref{eq:optimalth}), (\ref{eq:suboptimalth}) and (\ref{eq:pskth}) are all applied for simulation. The theoretical results in (\ref{eq:optimalpb}), (\ref{eq:gaussianpb}) and (\ref{eq:pskpb}) are also shown for comparison. We set $N=50$ and RCD = 0.5. It is seen that for all cases, the simulated BERs with perfect $\sigma_i^2$ are consistent with the theoretical BER. Moreover, the simulated BER with estimated $\sigma_i^2$ performs ignorably worse than that with perfect $\sigma_i^2$, which indicates the effectiveness of the proposed estimation approach in Section \ref{sec:estimation}. For the complex Gaussian (CG) ambient signal, the optimal detector outperforms the suboptimal one, as expected, and higher SNR leads to smaller BER while the performance improvement will flatten as SNR becomes relatively large, which verifies (\ref{eq:cgapb}). However, for the PSK ambient signal, it achieves better performance than the CG, since $\sqrt{\gamma}$ is in the numerator of (\ref{eq:pskpb}), while the effect of $\gamma$ on BER is partly alleviated by $\Sigma$ as shown in (\ref{eq:cgapb}). Moreover, there exists no error floor as SNR becomes larger, as analyzed in (\ref{eq:pskpba}).

\begin{figure}[t]
\centering
\includegraphics[width=110mm]{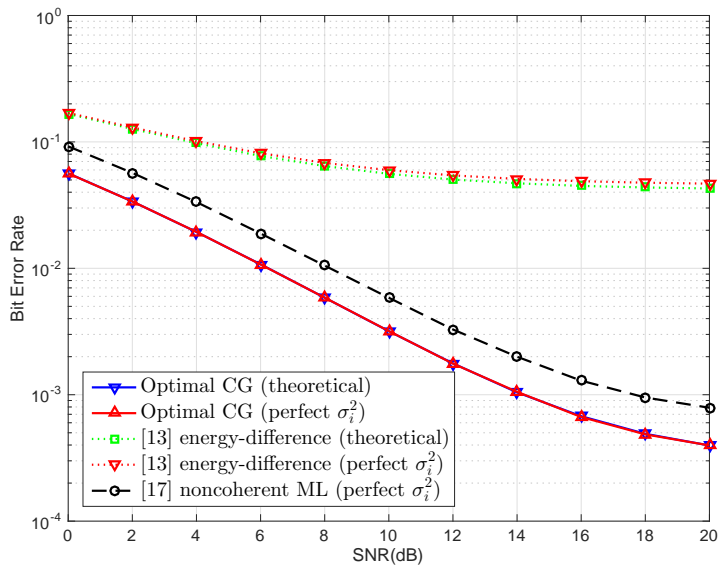}
\caption{Performance comparison between semi-coherent and non-coherent detectors with $N=40$, RCD = 0.5, and the CG ambient source. }
\label{fig:semivsnon}
\end{figure}
We then compare the performance of the semi-coherent detector with that of the existing noncoherent detectors in Fig. \ref{fig:semivsnon}, where $N=40$, RCD = 0.5 and the ambient source transmits CG signals. Specifically, the theoretical and simulated BERs of our optimal detector and the energy-difference method in \cite{refs:udabafabcs}, and the simulated BER of the noncoherent ML detector in \cite{refs:sdaoabswdm} are demonstrated, respectively, for comparison. All the simulated BERs are obtained with perfect $\sigma_i^2$. We can see that the optimal semi-coherent detector outperforms the noncoherent ones, at all SNR region.

\begin{figure}[t]
\centering
\includegraphics[width=110mm]{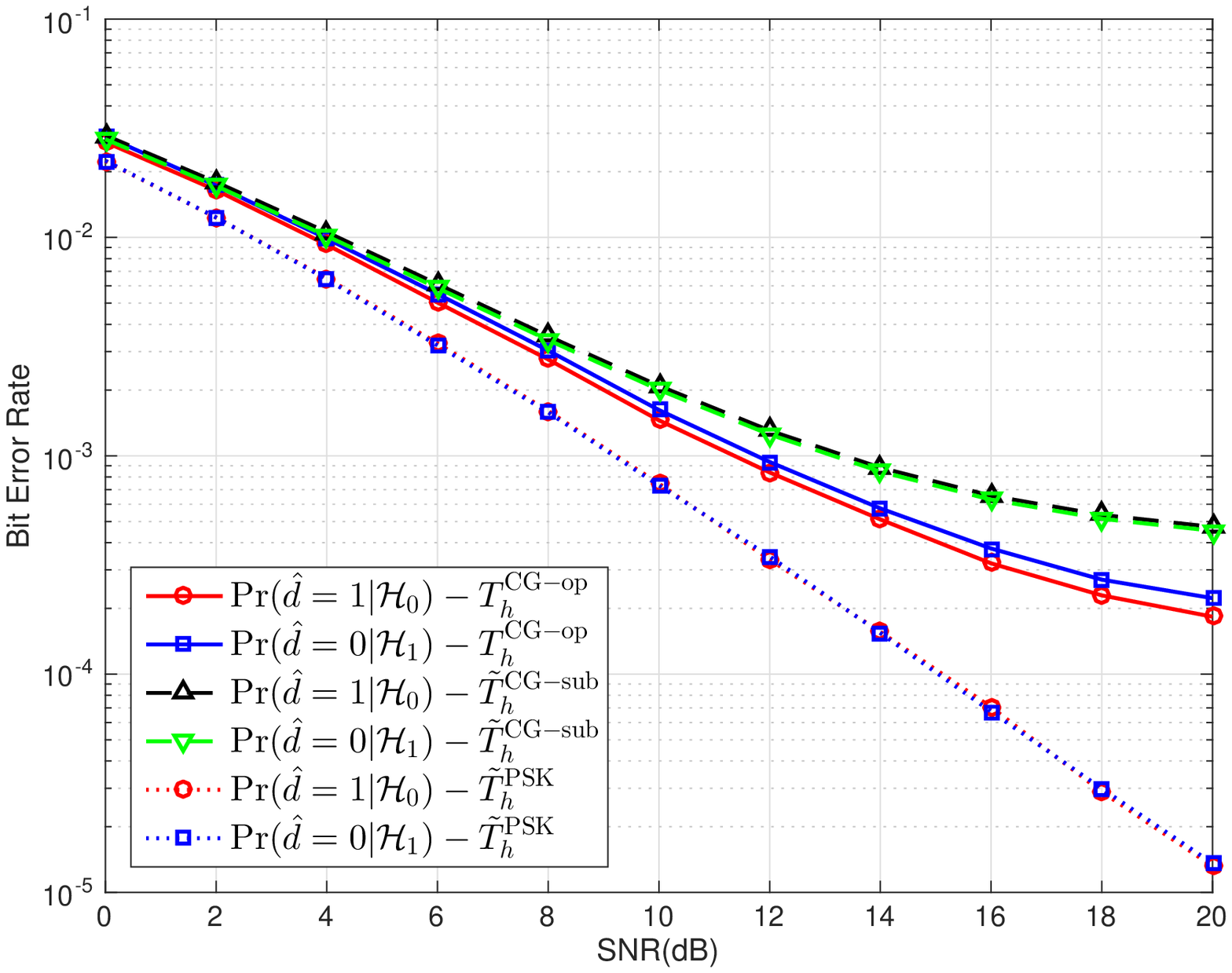}
\caption{Balanced or unbalanced phenomenon for the detectors with $N=40$ and RCD = 0.5. }
\label{fig:balanced BER}
\end{figure}
The balanced or unbalanced BER phenomenon of the proposed detectors is then illustrated in Fig. \ref{fig:balanced BER}, where we set $N=40$ and RCD = 0.5, and $\Pr(\hat{d}=1|\mathcal{H}_0)$ and $\Pr(\hat{d}=0|\mathcal{H}_1)$ corresponding to the thresholds (\ref{eq:optimalth}), (\ref{eq:ta}) and (\ref{eq:pskth2}) are simulated. In order to more clearly illustrate the phenomena, all the thresholds are only computed with perfect $\sigma_i^2$. As analyzed previously, both (\ref{eq:ta}) and (\ref{eq:pskth2}) can achieve the balanced BER for ``0'' and ``1'' while (\ref{eq:optimalth}) can not.

\begin{figure}[t]
\centering
\includegraphics[width=110mm]{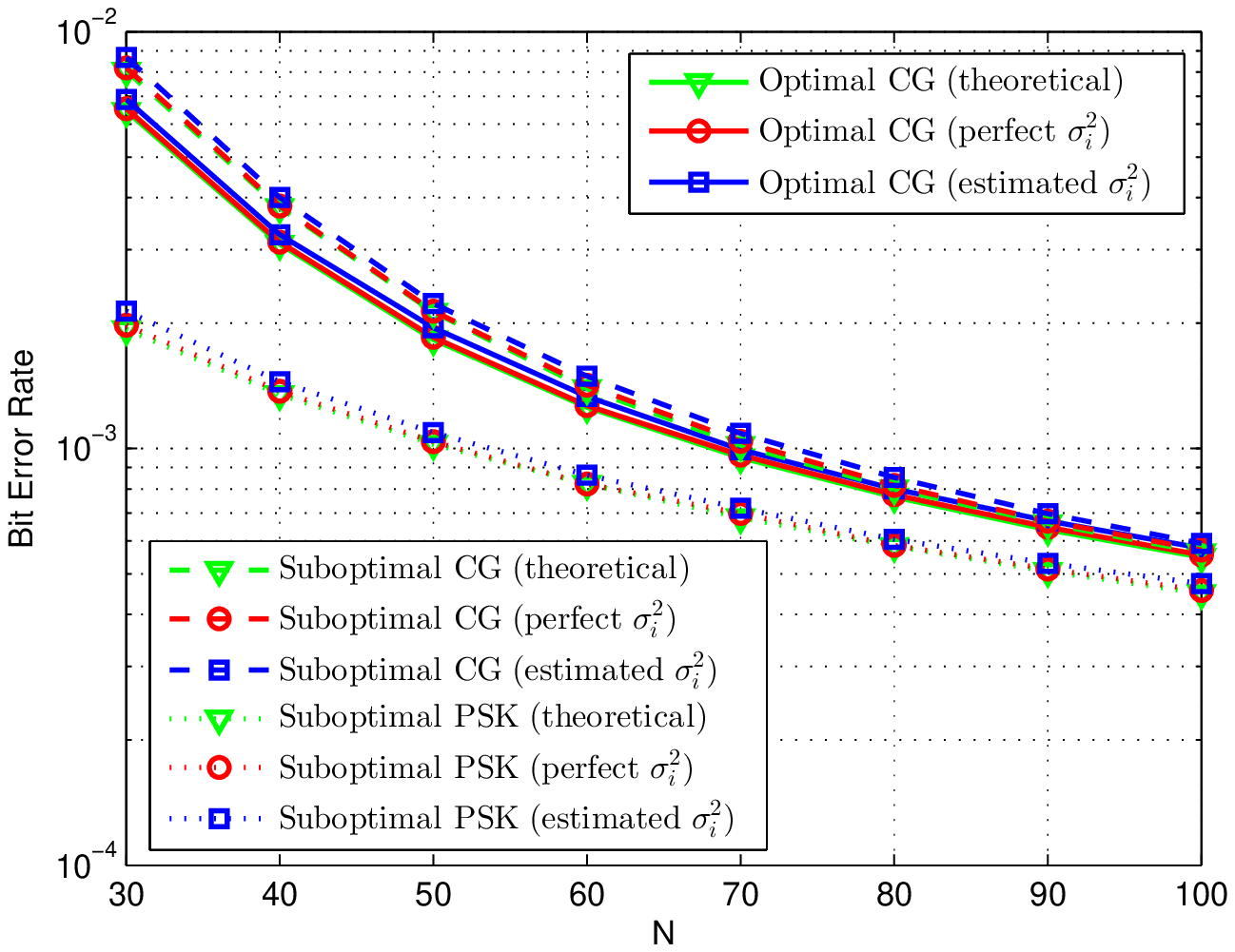}
\caption{BER versus $N$ for the detectors with SNR = 10 dB and RCD = 0.5.}
\label{fig:BERvsN}
\end{figure}
We next show the BER versus the length of the received signal vector, $N$, for the detectors in Fig. \ref{fig:BERvsN}. We set SNR = 10 dB and RCD = 0.5.  Similar to Fig. \ref{fig:BERvsSNR}, the curves of the theoretical BER, simulated BER with perfect $\sigma_i^2$ and simulated BER with estimated $\sigma_i^2$ are all close to each other. It is obviously seen that larger $N$ results in a reduced BER for all the detectors and there is no error floorwhen $N$ increases as seen from the theoretical expression (\ref{eq:cgapb}) and (\ref{eq:pskpba}). Nevertheless, in practice one cannot use very large $N$ since it will decrease the transmission rate of tag's symbols, increase the computational complexity, and may exceed the channel energy coherence time. In addition, the suboptimal detector with CG ambient signals performs closer to the optimal one since the Gaussian approximation utilized in the suboptimal detector works better at larger $N$. Moreover, the detector with CG ambient signals performs closer to that with PSK ambient signals as $N$ becomes large, because the distribution of $Z$ with CG ambient signals approximates to that with PSK ambient signals, both locating around $N\sigma_i^2$ with a relatively large probability as shown in (\ref{eq:hatmu}) and (\ref{eq:hatmupsk}).

\begin{figure}[t]
\centering
\includegraphics[width=110mm]{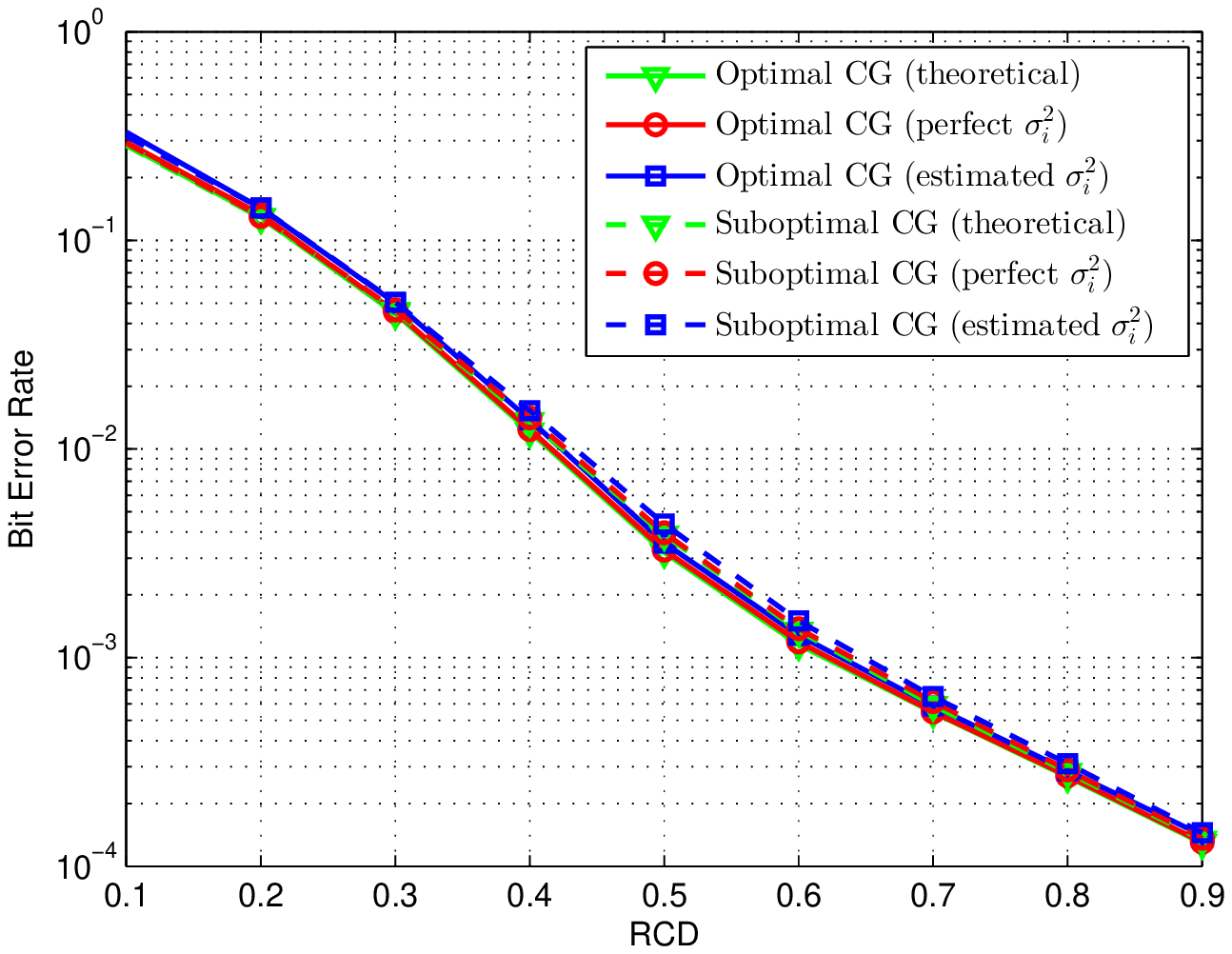}
\caption{BER versus RCD for the detectors with SNR = 10 dB and $N=40$.}
\label{fig:BERvsRCD}
\end{figure}
Fig. \ref{fig:BERvsRCD} depicts the curves of BER versus RCD corresponding to  the optimal and suboptimal detectors with CG ambient signals. We set SNR = 10 dB and $N=40$. Obviously, large RCD  results in smaller BER and there is no error floor effect, which is intuitively correct since the reader can easily decode the symbol when the channels corresponding to ``0'' and ``1'' are relatively distinct. Compared with the BER values in Fig. \ref{fig:BERvsSNR} and Fig. \ref{fig:BERvsN}, we can infer that RCD has a more important impact on BER performance than other system parameters. The improvement of the performance is gradual at small RCD but becomes rapid at large RCD, because the effect of SNR may slow down the reduction of BER at small RCD, while larger RCD will totally dominate the BER, as can be verified from (\ref{eq:cgapb}). It can also be seen that the BERs approach to 0.5 at small RCD, since both the detectors fail to work with the poorest detection environment and only yield random results.

\begin{figure}[t]
\centering
\includegraphics[width=110mm]{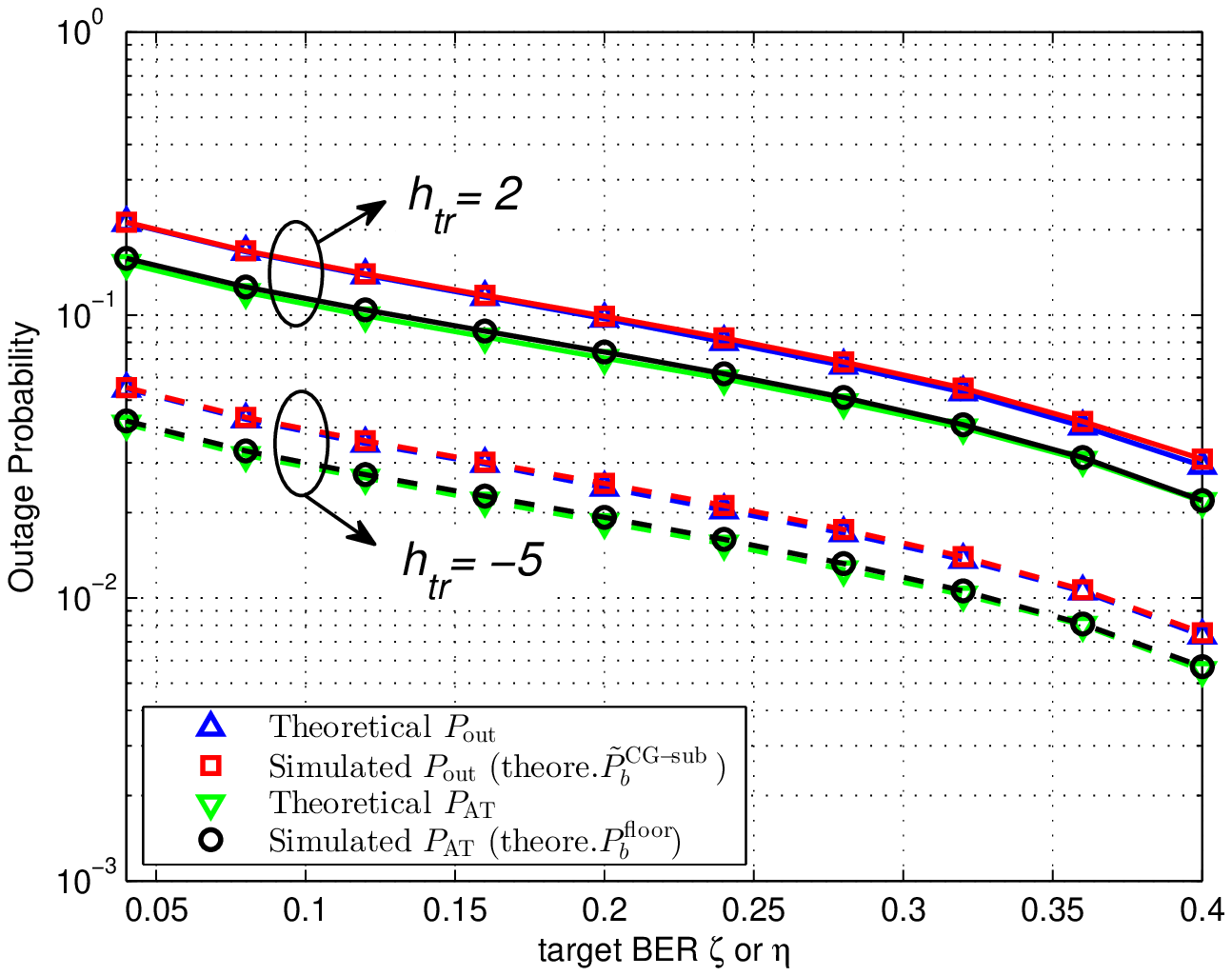}
\caption{Outage probability and AT probability versus target BER for the suboptimal detector with SNR = 20 dB and $N=40$.}
\label{fig:outage}
\end{figure}
\begin{figure}[t]
\centering
\includegraphics[width=110mm]{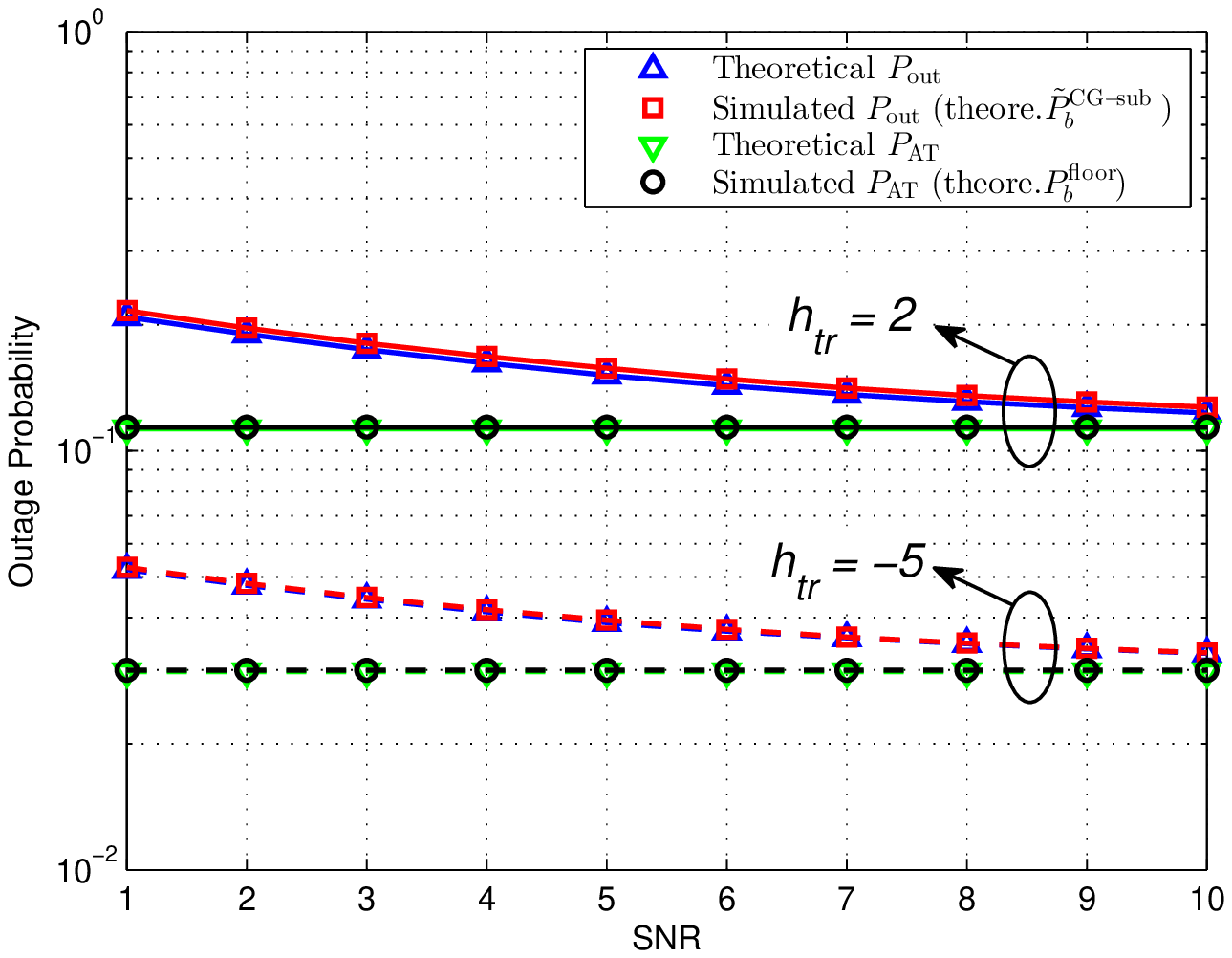}
\caption{Outage probability and AT probability versus SNR for the suboptimal detector with $N=40$ and $\zeta=\eta=0.1$, $h_{tr}$ is set as 2 and -5 for comparison.}
\label{fig:outagevssnr}
\end{figure}
In this example, we illustrate the outage probability and AT probability of the suboptimal detector versus the target BER in Fig. \ref{fig:outage} and those versus SNR in Fig. \ref{fig:outagevssnr}. In Fig. \ref{fig:outage}, the parameters are set as SNR = 5 dB and $N = 40$, while in Fig. \ref{fig:outagevssnr}, we set $N = 40$ and $\zeta=\eta=0.1$. Since $h_{tr}$ is assumed as a constant during the outage derivation in Appendix \ref{sec:appendix0}, we set $h_{tr} = 2$ and $h_{tr} = -5$ for comparison. The theoretical BERs in (\ref{eq:cgapb}) and (\ref{eq:pbfloor}) are employed for outage simulation. The theoretical outage probability given by (\ref{eq:outage}) and (\ref{eq:atoutage}) is displayed as well. As can be seen, the theoretical analysis matches the results of the Monte Carlo runs very well.
Naturally, a larger target BER leads to a lower outage probability. As mentioned in Fig. \ref{fig:BERvsSNR}, BER approaches an SNR-independent error floor as SNR turns large, while the outage probability correspondingly flattens and approaches the AT probability. Meanwhile, $h_{tr}$ with larger absolute value can achieve lower AT probability since larger $|h_{tr}|$ will amplify the difference between $|h_0|$ and $|h_1|$, i.e., the RCD or the correlation coefficient $\rho$ in (\ref{eq:rho}), which would contribute to a better outage performance.

\begin{figure}[t]
\centering
\includegraphics[width=110mm]{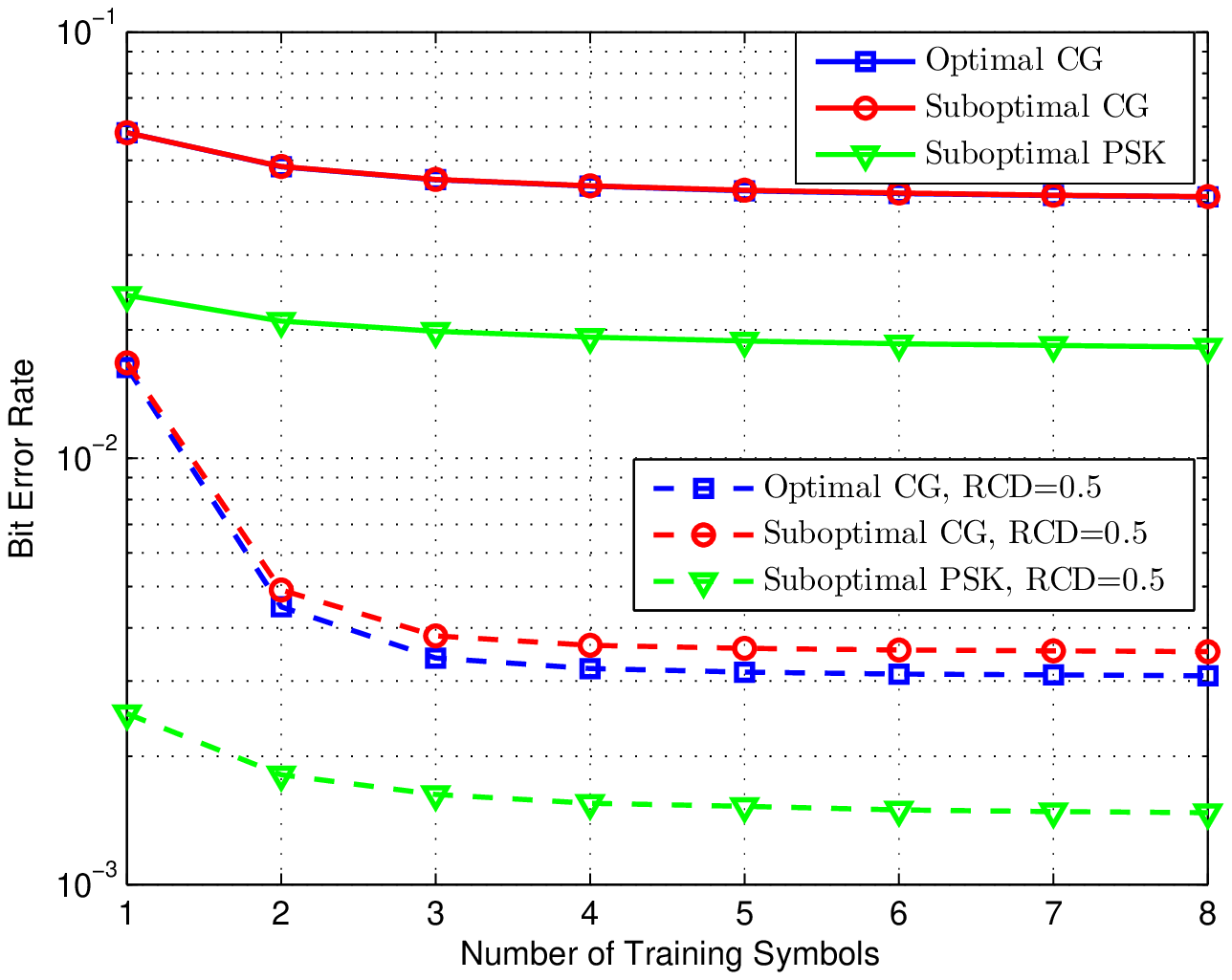}
\caption{BER versus the number of training symbols for the three detectors with SNR = 10 dB and $N=40$. RCD is unconstrained and set as 0.5, respectively.}
\label{fig:BERvsTrainNum}
\end{figure}
Lastly, we demonstrate simulated BER versus the number of training symbols in Fig. \ref{fig:BERvsTrainNum} when three detection thresholds (\ref{eq:optimalth}), (\ref{eq:suboptimalth}) and (\ref{eq:pskth}) are applied for comparison. We set SNR = 10 dB, $N = 40$.  The RCD is unconstrained and set as 0.5 for comparison. We can see that, on one hand, sending more training symbols contributes to a better BER performance, especially when the number turns from 1 to 2; on the other hand, no more distinct performance improvement can be achieved by keeping increasing the number of training symbols. Hence, 3 or 4 training symbols are appropriate for the comprehensive consideration of system performance and complexity.

\section{Conclusion }
\label{sec:conclusion}
This paper presents a theoretical study of the semi-coherent detection for the ambient backscatter system, where training symbols are sent to acquire the detection-required parameters rather than the channels themselves. Our goal is to offer feasible suggestions for practical system designs of this new born communication prototype. We proposed designed symbol detectors  under different scenarios to realize the trade-off between the detection accuracy and the freedom from prior knowledge. The closed-form BER expressions and outage analysis are also derived for various cases, which demonstrate the effect of different system parameters. Simulation results are provided to verify the correctness of our studies.

\appendices
\section{Calculation of the integral $J_1(\zeta)$ and $J_2(\zeta)$}\label{sec:appendix0}
Consider the situation where the distance between the tag and the reader is much smaller than that between the tag and the source (or the reader and the source), and the communication environment around the tag and the reader is usually stationary during the data transmission, the channel coefficient $h_{tr}$ can be taken as a constant. Then we regard $h_1=h_0+\alpha h_{st}h_{tr}$ as the sum of two independent zero-mean complex Gaussian random variables, i.e., a new zero-mean complex Gaussian random variable whose variance is $\sigma_{h_1}^2=\sigma_{h_0}^2+\alpha^2 |h_{tr}|^2\sigma_{st}^2$. Since $|h_0|^2$ and $|h_1|^2$ are correlated, their joint PDF is given by
\begin{align}\label{h0h1jointpdf}
f_{|h_0|^2,|h_1|^2}(y_1,y_2)
=\frac{1}{(1-\rho^2)\sigma_{h_0}^2\sigma_{h_1}^2}\exp\left[-\frac{1}{1-\rho^2}\left(\frac{y_1}{\sigma_{h_0}^2}+\frac{y_2}{\sigma_{h_1}^2}\right)\right]
I_0\left(\frac{2\rho\sqrt{y_1y_2}}{\sigma_{h_0}\sigma_{h_1}(1-\rho^2)}\right),
\end{align}
where $\rho$ is the correlation coefficient between $|h_0|^2$ and $|h_1|^2$ with the form
\begin{align}\label{eq:rho}
\rho=\frac{\mathrm{E}\{|h_0|^2|h_1|^2\}-\mathrm{E}\{|h_0|^2\}\mathrm{E}\{|h_1|^2\}}{\sqrt{\mathrm{D}\{|h_0|^2\}}\sqrt{\mathrm{D}\{|h_1|^2\}}}
=\frac{(2\sigma_{h_0}^4+\sigma_{h_0}^2\sigma_f^2)-\sigma_{h_0}^2(\sigma_{h_0}^2+\sigma_f^2)}{\sigma_{h_0}^2(\sigma_{h_0}^2+\sigma_f^2)}
=\frac{\sigma_{h_0}^2}{\sigma_{h_1}^2},
\end{align}
and $I_0(\cdot)$ is the modified Bessel function of the first kind.

According to (\ref{eq:pout}), $J_1(\zeta)$ is expressed as
\begin{align}
J_1(\zeta)
&=\int_0^{\lambda_1}
\frac{\exp\left(\frac{-y_1}{(1-\rho^2)\sigma_{h_0}^2}\right)}{(1-\rho^2)\sigma_{h_0}^2\sigma_{h_1}^2}
\int_0^{-\frac{\lambda_1 y_1}{\lambda_2}+\lambda_1}
\exp\left(\frac{-y_2}{(1-\rho^2)\sigma_{h_1}^2}\right)
I_0\left(\frac{2\rho\sqrt{y_1y_2}}{\sigma_{h_0}\sigma_{h_1}(1-\rho^2)}\right)\mathrm{d}y_2\mathrm{d}y_1
\nonumber\\
&=\sum_{m=0}^\infty\frac{\rho^{2m}}{(m!)^2(1-\rho^2)^m\sigma_{h_0}^{2(m+1)}}
\int_0^{\lambda_1}y_1^m\exp\left(\frac{-y_1}{(1-\rho^2)\sigma_{h_0}^2}\right)
\gamma\left(m+1,\frac{-\frac{\lambda_1 y_1}{\lambda_2}+\lambda_1}{(1-\rho^2)\sigma_{h_1}^2}\right)\mathrm{d}y_1
\end{align}
where we use the series representation of $I_0(z)$ \cite{refs:homfwfgamt}
\begin{align}
I_0(z) = \sum_{m=0}^\infty \frac{1}{(m!)^2} \left(\frac{z}{2}\right)^{2m}.
\end{align}

As the lower incomplete gamma function has the special case that \cite{refs:toisap}
\begin{align}
\gamma(m+1,x)=m!\left[1-\mathrm{e}^{-x}\left(\sum_{n=0}^m\frac{x^n}{n!}\right)\right],
\end{align}
using the binomial theorem, we have
\begin{align}
J_1(\zeta)
&=\sum_{m=0}^\infty\frac{\rho^{2m}}{m!(1-\rho^2)^m\sigma_{h_0}^{2(m+1)}}
\left[\int_0^{\lambda_1}y_1^m\exp\left(\frac{-y_1}{(1-\rho^2)\sigma_{h_0}^2}\right)\right.\mathrm{d}y_1-\exp\left(\frac{-\lambda_1}{(1-\rho^2)\sigma_{h_1}^2}\right)
\nonumber\\
&~~~~~~~\sum_{n=0}^m\sum_{k=0}^n
\frac{\binom{n}{k}\left(-\frac{\lambda_1}{\lambda_2}\right)^k\lambda_1^{n-k}}{n!(1-\rho^2)^n\sigma_{h_1}^{2n}}
\left.\int_0^{\lambda_1}y_1^{m+k}
\exp\left(\frac{-\lambda_1\sigma_{h_0}^2+\lambda_2\sigma_{h_1}^2}{(1-\rho^2)\lambda_2\sigma_{h_0}^2\sigma_{h_1}^2}y_1\right)
\mathrm{d}y_1\right]
\nonumber\\
&=\sum_{m=0}^\infty\frac{\rho^{2m}(1-\rho^2)}{m!}
\gamma\left(m+1,\frac{\lambda_1}{(1-\rho^2)\sigma_{h_0}^2}\right)-
\exp\left(\frac{-\lambda_1}{(1-\rho^2)\sigma_{h_1}^2}\right)\sum_{m=0}^\infty\sum_{n=0}^m\sum_{k=0}^n
\nonumber\\
&\frac{\binom{n}{k}(-1)^k\rho^{2m}(1-\rho^2)^{k-n+1}\lambda_1^n\lambda_2^{m+1}\sigma_{h_0}^{2k}\sigma_{h_1}^{2(m+k-n+1)}}
{m!n!\left(\lambda_2\sigma_{h_1}^2-\lambda_1\sigma_{h_0}^2\right)^{m+k+1}}
\gamma\left(m+k+1,\frac{\lambda_1\lambda_2\sigma_{h_1}^2-\lambda_1^2\sigma_{h_0}^2}{(1-\rho^2)\lambda_2\sigma_{h_0}^2\sigma_{h_1}^2}\right).
\end{align}
Similarly, we can obtain the second integration $J_2(\zeta)$ as
\begin{align}\label{eq:J2}
J_2(\zeta)
&=\sum_{m=0}^\infty\frac{\rho^{2m}}{(m!)^2(1-\rho^2)^m\sigma_{h_0}^{2(m+1)}}
\int_{\lambda_1}^\infty
y_1^m\exp\left(\frac{-y_1}{(1-\rho^2)\sigma_{h_0}^2}\right)
\nonumber\\
&~~~\left[\Gamma\left(m+1,\frac{\lambda_2-\frac{\lambda_2 y_1}{\lambda_1}}{(1-\rho^2)\sigma_{h_1}^2}\right)-
\Gamma\left(m+1,\frac{\lambda_1-\frac{\lambda_1 y_1}{\lambda_2}}{(1-\rho^2)\sigma_{h_1}^2}\right)\right]
\mathrm{d}y_1\triangleq J_{21}(\zeta)-J_{22}(\zeta).
\end{align}

Take the computation of the first part in (\ref{eq:J2}) as example, we have
\begin{align}
J_{21}(\zeta)& =\sum_{m=0}^\infty\frac{\rho^{2m}\exp\left(\frac{-\lambda_2}{(1-\rho^2)\sigma_{h_1}^2}\right)}{m!(1-\rho^2)^m\sigma_{h_0}^{2(m+1)}}
\int_{\lambda_1}^\infty y_1^m
\exp\left(\frac{\lambda_2\sigma_{h_0}^2-\lambda_1\sigma_{h_1}^2}{(1-\rho^2)\lambda_1\sigma_{h_0}^2\sigma_{h_1}^2}y_1\right)
\sum_{n=0}^m\frac{\left(\lambda_2-\frac{\lambda_2 y_1}{\lambda_1}\right)^n}{n!(1-\rho^2)^n\sigma_{h_1}^{2n}}\mathrm{d}y_1
\nonumber\\
&=\sum_{m=0}^\infty\frac{\rho^{2m}\exp\left(\frac{-\lambda_2}{(1-\rho^2)\sigma_{h_1}^2}\right)}{m!(1-\rho^2)^m\sigma_{h_0}^{2(m+1)}}
\sum_{n=0}^m\sum_{k=0}^n\frac{\binom{n}{k}(-\lambda_1)^{-k}\lambda_2^n}{n!(1-\rho^2)^n\sigma_{h_1}^{2n}}
\int_{\lambda_1}^\infty
y_1^{m+k}\exp\left[\frac{(\lambda_2\sigma_{h_0}^2-\lambda_1\sigma_{h_1}^2) y_1}{(1-\rho^2)\lambda_1\sigma_{h_0}^2\sigma_{h_1}^2}\right]
\mathrm{d}y_1
\nonumber\\
&=\exp\left(\frac{-\lambda_2}{(1-\rho^2)\sigma_{h_1}^2}\right)\sum_{m=0}^\infty\sum_{n=0}^m\sum_{k=0}^n
\frac{\binom{n}{k}(-1)^k\rho^{2m}(1-\rho^2)^{k-n+1}\lambda_1^{m+1}\lambda_2^n\sigma_{h_0}^{2k}\sigma_{h_1}^{2(m+k-n+1)}}
{m!n!(\lambda_1\sigma_{h_1}^2-\lambda_2\sigma_{h_0}^2)^{m+k+1}}
\nonumber\\
&~~~~\Gamma\left(m+k+1,\frac{\lambda_1\sigma_{h_1}^2-\lambda_2\sigma_{h_0}^2}{(1-\rho^2)\sigma_{h_0}^2\sigma_{h_1}^2}\right),
\end{align}
where the upper incomplete gamma function also has a special case that
\begin{align}
\Gamma(m+1,x)=m!\mathrm{e}^{-x}\left(\sum_{n=0}^m\frac{x^n}{n!}\right).
\end{align}

Therefore, (\ref{eq:outage}) can be obtained from $J_1(\zeta)+J_{21}(\zeta)-J_{22}(\zeta)$ with the relationship that $\Gamma(m+1,x)=m!-\gamma(m+1,x)$.

\section{ PDF of $X=\frac{|h_0|^2}{|h_1|^2}$}\label{sec:appendix1}

With the PDF definition of the ratio of two random variables \cite{refs:prvasp}, the PDF of $X$ can be obtained from
\begin{align}\label{eq:fx}
f_X(x)
&=\int_0^\infty y f_{|h_0|^2,|h_1|^2}(xy,y)\mathrm{d}y
 =\int_0^\infty C_1 y \mathrm{e}^{-C_2(x+\rho)y}I_0\left(C_3\sqrt{x}y\right)\mathrm{d}y  \nonumber\\
&=\sum_{m=0}^\infty \frac{C_1 C_3^{2m}x^m}{4^m(m!)^2}\int_0^\infty y^{2m+1}\mathrm{e}^{-C_2(x+\rho)y}\mathrm{d}y
=\sum_{m=0}^\infty\frac{C_{4m}x^m}{(1+x/\rho)^{2m+2}},~~~~~x\geq0
\end{align}
where
\begin{align}
C_1&=\frac{1}{(1-\rho^2)\sigma_{h_0}^2\sigma_{h_1}^2},~~~~~~~~~~~~~~~~
C_2 =\frac{1}{(1-\rho^2)\sigma_{h_0}^2},\\
C_3 &=\frac{2\rho}{\sigma_{h_0}\sigma_{h_1}(1-\rho^2)},~~~~~~~~~~~~~~~~
C_{4m}=\frac{(1-\rho^2)\rho^{m-1}(2m+1)!}{(m!)^2},
\end{align}
Thus the CDF of $X$ is given by
\begin{align}\label{eq:fx}
F_X(x)=\int_0^x f_X(y)\mathrm{d}y=\sum_{m=0}^\infty\frac{C_{4m}x^{m+1}}{m+1}{}_2F_1\left(2m+2,m+1;m+2,-\frac{x}{\rho}\right).
\end{align}

\linespread{1.2}

\end{document}